\newtheorem{theorem}{Theorem}
\newtheorem{lemma}{Lemma}
\newtheorem{corollary}{Corollary}
\begin{document}
\title{Bit Allocation Laws for Multi-Antenna Channel Feedback Quantization: Multi-User Case}
\author{Behrouz Khoshnevis and Wei Yu \thanks{The material in this paper has been presented
in part at the 44th Conference on Information Sciences and Systems, Princeton, NJ, Mar. 2010. This work was supported by the Natural Science and Engineering Research Council (NSERC) of Canada.}\thanks{The authors are with the Edward S. Rogers Sr. Department of Electrical and Computer Engineering, University of Toronto, 10 King's College Road, Toronto, Ontario, Canada M5S 3G4 (email: bkhoshnevis@comm.utoronto.ca; weiyu@comm.utoronto.ca).}} \maketitle

\newcommand{\qbar}{\bar{q}}
\newcommand{\gbar}{\bar{\gamma}}
\newcommand{\ds}{\mathds}
\newcommand{\md}{\textmd}
\newcommand{\be}{\begin{equation}}
\newcommand{\ee}{\end{equation}}
\newcommand{\ci}{\circ}
\newcommand{\eq}{\eqref}
\newcommand{\defined}{\stackrel{\md{def}}{=}}
\newcommand{\csi}{_{_{\md{CSI}}}}
\newcommand{\mbf}{\mathbf}
\newcommand{\mbb}{\mathbb}
\newcommand{\mc}{\mathcal}
\newcommand{\bh}{\mbf{h}}

\newcommand{\stupe}{\mbf{S}(\mbf{H})}
\newcommand{\bv}{\mbf{v}}

\newcommand{\D}{\mc{D}}
\newcommand{\M}{\mc{M}}
\newcommand{\NM}{\dot{N}}
\newcommand{\ND}{\ddot{N}}
\newcommand{\BM}{\dot{B}}
\newcommand{\BD}{\ddot{B}}
\newcommand{\NkM}{\dot{N}_k}
\newcommand{\NkD}{\ddot{N}_k}
\newcommand{\mcy}{\mathcal{\mathbf{y}}}
\newcommand{\bu}{\mbf{u}}
\newcommand{\bw}{\mbf{w}}

\newcommand{\BkD}{\ddot{B}_k}
\newcommand{\BkM}{\dot{B}_k}

\newcommand{\BbarM}{{\dot{{B}}_{\md{ave}}}}
\newcommand{\BbarD}{{\ddot{{B}}_{\md{ave}}}}

\newcommand{\sh}{\mc{S}(\mbf{h})}

\newcommand{\du}{\mathrm{d}}

\vspace{-1cm}

\begin{abstract}
This paper addresses the optimal design of limited-feedback downlink multi-user spatial multiplexing systems. A multiple-antenna base-station is assumed to serve multiple single-antenna users, who quantize and feed back their channel state information (CSI) through a shared rate-limited feedback channel. The optimization problem is cast in the form of minimizing the average transmission power at the base-station subject to users' target signal-to-interference-plus-noise ratios (SINR) and outage probability constraints. The goal is to derive the feedback bit allocations among the users and the corresponding channel magnitude and direction quantization codebooks in a high-resolution quantization regime. Toward this end, this paper develops an optimization framework using approximate analytical closed-form solutions, the accuracy of which is then verified by numerical results. The results show that, for channels in the real space, the number of channel direction quantization bits should be $(M-1)$ times the number of channel magnitude quantization bits, where $M$ is the number of base-station antennas. Moreover, users with higher requested quality-of-service (QoS), i.e. lower target outage probabilities, and higher requested downlink rates, i.e. higher target SINR's, should use larger shares of the feedback rate. It is also shown that, for the target QoS parameters to be feasible, the total feedback bandwidth should scale logarithmically with the geometric mean of the target SINR values and the geometric mean of the inverse target outage probabilities. In particular, the minimum required feedback rate is shown to increase if the users' target parameters deviate from the corresponding geometric means. Finally, the paper shows that, as the total number of feedback bits $B$ increases, the performance of the limited-feedback system approaches the perfect-CSI system as ${2^{-{B}/{M^2}}}$.

\end{abstract}
\begin{keywords} Beamforming, bit allocation, channel quantization, limited feedback, multiple antennas, outage probability, power control, spatial multiplexing.
\end{keywords}

\section{Introduction}\label{S_Yek}

Multiple-input multiple-output (MIMO) technology can potentially provide significant performance improvements for wireless systems. More specifically, in the context of cellular communications, the availability of multiple antennas at the base-station allows it to simultaneously transmit to multiple users by multiplexing their data streams and hence improve the total downlink rate. Such systems are generally referred to as multi-user spatial multiplexing systems.

The performance of multi-user spatial multiplexing systems depends heavily on the amount of channel state information (CSI) at the base-station \cite{sharif05,jindal}. Such information is necessary to form the downlink transmission beams at the base-station and to perform rate/power adaptation for each user. Acquiring channel information, however, is a challenging issue in practice especially for frequency-division duplex (FDD) systems, where the uplink and downlink channels use different frequency bands. In these systems, the users need to explicitly quantize their channels and send back the quantized information through a shared feedback channel. The feedback link in such a system is usually a rate-limited control channel, hence the term \emph{limited-feedback systems}.


The scarcity of feedback bandwidth in limited-feedback systems necessitates efficient channel quantization codebook structures and an optimal allocation of feedback bits among those codebooks. This paper aims at deriving these structures and the corresponding bit allocation laws for limited-feedback multi-user spatial multiplexing systems.

\subsection{Related Work}

Multi-antenna communications with limited CSI is extensively studied in the literature for single-cell systems \cite{jindal, yoo, huang, caire, sharif05, ding07, kount06, icc, ciss, huang09} and to some extent for cooperative multi-cell networks \cite{bhagavatula10, bhagavatula10_2, Han09}. This paper focuses on single-cell multi-user systems.


The major advantage of multi-antenna multi-user systems with respect to single-user systems is the sum-rate \emph{multiplexing gain}, which follows from the fact that multiple simultaneous transmissions can be established in a multi-antenna downlink. In order to preserve this gain in limited-feedback systems, the author of \cite{jindal} shows that the total feedback rate should scale linearly with the signal-to-noise ratio (SNR) in dB scale. The work in \cite{jindal} addresses a setup with small number of users. In a network with large number of users, there is another source of sum-rate improvement, referred to as \emph{multi-user diversity gain}, which is realized by the base-station opportunistically scheduling users with favorable channel conditions. For scheduling, a well justified approach is to choose users with high channel gains and near-orthogonal channel directions \cite{yoo,huang,caire,sharif05,ding07,huang09,kount06}. The authors of \cite{yoo} specifically show that one needs the channel gain information (CGI) in addition to the quantized channel direction information (CDI) in order to realize the multi-user diversity gain. The gain information however is assumed to be perfect in \cite{yoo}. The split of feedback bits between CGI and CDI quantization introduces an interesting tradeoff between multiplexing gain and diversity gain. This tradeoff is studied by \cite{kount06}, where the authors numerically show that more bits should be used for CGI quantization in order to benefit from multi-user diversity gain as the number of users increases.

For the purpose of precoding the information intended to scheduled users, two distinct approaches are proposed in the literature. The first group of work uses zero-forcing beamforming based on the quantized directions \cite{yoo,ding07}. The second group adopts a codebook (or multiple codebooks) of orthonormal beamforming vectors and selects beamforming directions based on the signal-to-interference-plus-noise ratio (SINR) feedbacks from the users \cite{huang,sharif05,huang09}. The authors of \cite{huang09} specifically claim that in the regime of large number of users, the orthogonal beamforming approach outperforms the zero-forcing method. Finally, the authors of \cite{caire} present a more thorough analysis of multi-user limited-feedback systems considering joint training, scheduling, and beamforming.

\subsection{System Model}\label{system_model}

This paper considers a limited-feedback multi-user system with a $M$-antenna base-station and $M$ single-antenna users. The users send their quantized CSI to the base-station through a feedback link with a capacity of $B$ bits per each downlink transmission block. Based on this quantized information, the base-station then comes up with the downlink transmission powers and beamforming vectors.

Our goal is to optimize the system performance subject to the total feedback rate constraint. Although the proposed approach is rather generic, we impose several simplifying assumptions on the system model in order to achieve closed-form solutions. These assumptions are listed below and will be explicitly mentioned and justified whenever needed throughout the text:

\noindent \emph{Assumptions}:
\begin{itemize}
\item[A1.] Most of the analysis in this paper is based on the assumption of \emph{high resolution quantization}, i.e. $B\rightarrow\infty$. This assumption is made mainly to achieve a tractable formulation for system optimization and has been used frequently in the literature of quantization theory \cite{gersho,zheng}. In addition, the paper investigates, both numerically and analytically, the system performance with moderate values of $B$, and examines the regime where the high resolution results are applicable.
\item[A2.] Users' channels are independent and identically distributed (i.i.d.).
\item[A3.] Users' channel directions are uniformly distributed over the $M$-dimensional unit hypersphere. This includes the Rayleigh i.i.d. channels as an special case \cite{heath03}.
\item[A4.] Users' channel magnitudes are independent from channel directions and can have an arbitrary distribution.
\item[A5.] We assume a product structure for the channel quantization codebook, i.e. channel magnitudes and channel directions are quantized independently. This product structure is justified in \cite{globecom,icc}. According to the high resolution assumption, if we denote the magnitude and direction codebook sizes by $\NM_k$ and $\ND_k$, we have $\NkM,\NkD\rightarrow\infty$ for each user $1\leq k \leq M$.
\item[A6.] The beamforming vectors are zero-forcing directions for the quantized directions. This assumption is to mimic the perfect-CSI case, where zero-forcing beamforming is shown to achieve asymptotically optimal sum-rate scaling with SNR \cite{jindal05,yoo06}.
\item[A7.] The user channels are assumed to be real vectors. Although this assumption appears in the earlier literature, e.g. \cite{geyi,pun}, we use it mainly for the ease of geometric representation of the quantization regions and the corresponding calculations. The extension of the analysis to complex space is discussed in Section \ref{S_Haft}.
\end{itemize}

\subsection{Problem Formulation}

We formulate the system design problem as the minimization of the average sum power subject to the users' outage probability constraints. In order to differentiate between the users' QoS requirements, we assume different target SINR's and outage probabilities across the users. Our goal is to derive the optimal split of feedback bits among the users and the corresponding magnitude and direction quantization codebooks.

Different variations of the power minimization formulation are used in the literature, e.g. in \cite{schubert04,rashid98,wiesel06,sharma06,tarighat05}, as an appropriate formulation for fixed-rate delay-sensitive applications, e.g. voice over IP, video conferencing, and interactive gaming. The reliability of the fixed-rate link is usually achieved by applying power control at the base-station to compensate for the channel fading, as in Wideband Code Division Multiple Access (WCDMA) system standards \cite{book_wcdma}.

An alternative problem formulation is to maximize the average sum rate subject to a power constraint \cite{jindal, yoo, huang, sharif05, ding07, kount06,huang09}. In most of these formulations, the transmission power is fixed and the quantized information is used only to adapt the transmission rates. This type of formulation is more appropriate for variable-rate communication systems, e.g. Worldwide Interoperability for Microwave Access (WiMAX) and 3GPP Long Term Evolution (LTE) system standards \cite{wimax,lte}.

We make the following comments on the existing rate-maximization formulations in the literature:

\begin{itemize}
\item[1.] The sum-rate maximization problem assumes equal priority for all users in terms of their applications. Since the users are not differentiated based on their QoS measures, this formulation cannot answer the question of how to optimally split the feedback bits among users with different QoS requirements.
\item[2.] The channel gain information (CGI) is clearly an important factor in scheduling the users and also in setting the downlink rate for each user. However, most of the existing literature either assumes perfect channel gain information or completely ignores this information. It is therefore not clear how to optimally split the feedback bits between channel direction and channel gain quantizers in the context of sum-rate maximization problem. The works that look into this problem are mainly numerical and lack closed-form solutions \cite{kount06,huang09}.
\end{itemize}

One solution to the first issue raised above is to prioritize users with certain weights and consider the weighted sum-rate maximization instead of the sum-rate itself. These weights can be set by the scheduler based on users' QoS requirements. The proportional fairness scheduler for example sets the weights based on users' backlogged traffic by assigning a higher weight to a user with larger backlog. This type of formulation appears for example in \cite{schubert,girici,bjornson,yu10} for perfect-CSI systems. Generalizing this formulation to limited-feedback systems however appears to be a difficult problem. We are not specifically aware of any work that addresses such a problem. For this reason, this paper resorts to a power minimization formulation, which can easily incorporate the QoS constraints by assuming different target SINRs and outage probabilities across the users. The power minimization formulation also simplifies the CGI/CDI bit allocation problem and provides insight to the system design by allowing for closed-form asymptotic bit allocation solutions.


\subsection{Proposed Approach}
Our approach for solving the power minimization problem is to fix the channel outage regions in advance and transform the problem to a robust optimization problem. Assuming zero-forcing beamforming vectors, we first formulate the robust power control problem in the form of a semi-definite programming (SDP) problem. Using an approximate upper bound solution to the SDP problem, we then derive the codebook structures and the corresponding bit allocations in the asymptotic regime where $B\rightarrow\infty$. Within the proposed approximate optimization framework, we show that for channels in real space:

\begin{itemize}
\item[1.] The optimal number of channel direction quantization bits is $M-1$ times the number of channel magnitude quantization bits, where $M$ is the number of base-station antennas.

\item[2.] The share of the $k$th user from the total feedback rate is controlled by $\log{\gamma_k}$ and $\log{1/q_k}$, where $\gamma_k$ and $q_k$ are the user's target SINR and outage probability. As a general rule, a user with a lower target outage probability and higher target SINR needs a higher channel quantization resolution and therefore requires a larger share of the total feedback rate.

\item[3.] For the outage probability constraints to be feasible, the total feedback rate should scale logarithmically with $\gbar$, the geometric mean of the target SINR values, and $1/\qbar$, the geometric mean of the inverse target outage probabilities. Moreover, the minimum required feedback rate increases if the users' target parameters deviate from the average parameters $\gbar$ and $\qbar$, i.e. there is a feedback rate penalty for serving users with non-similar target parameters. The higher the deviation, the higher the penalty.

\item[4.] As the total feedback rate $B$ increases, the performance of the limited CSI system approaches the performance of the perfect-CSI system as $2^{-\frac{B}{M^2}}$.
\end{itemize}
These optimality results are based on minimizing an upper bound of the sum power. The closeness of the upper bound to the exact sum power is verified numerically in the paper.

\subsection{Organization of the Paper}

The remainder of this paper is organized as follows. Section \ref{S_Do} provides an overview of the perfect CSI system. Section \ref{S_Se} presents the system design problem in its general form and describes our approach in transforming it to a robust design problem. Section \ref{S_Panj} describes the product channel quantization codebook structure. In Section \ref{S_Char}, we study the power control optimization for fixed quantization codebooks and derive an upper bound for the average sum power. By using the sum power upper bound, we then optimize the product codebook structures in Section \ref{S_Shish} and derive the asymptotically optimal bit allocation laws. Finally, Section \ref{S_Hasht} presents the numerical results and Section \ref{S_Haft} concludes the paper.

\emph{Notations:} Most of the computations in this paper are in real space. The logarithm functions are base 2. The angle between any two unit-norm vectors $\bu$ and $\bv$ is defined as $\angle(\bu,\bv)=\arccos |\bu^T\bv|$ so that $0\leq \angle(\bu,\bv)\leq\pi/2$.

\section{Multiuser Spatial Multiplexing System with Perfect CSI: Outage is Inevitable} \label{S_Do}

We start by assuming perfect CSI at the base-station and show that, unlike a single-user system, outage is inevitable in the multi-user system even with perfect CSI. The difference with the single-user case is due to the fact that the base-station in a multi-user system needs to distinguish the users spatially and when the user channels are closely aligned it is not possible to satisfy the users' target SINR's with a bounded average transmission power.

Consider a multi-user downlink channel with $M$ antennas at the base-station and $M$ users each with a single antenna. Let $\bh_k\in\mathds{R}^M$, $\mbf{v}_k\in\ds{R}^M$, $P_k$, and $\gamma_k$ denote respectively, the user channel, the unit-norm beamforming vector, the allocated power, and the target SINR for the $k$th user, $1\leq k\leq M$. The minimization of the transmission sum power subject the user SINR constraints is formulated as follows:
\begin{align}\label{CSI-sumpower}
&\min\limits_{P_k,\mathbf{v}_k} \sum_{k=1}^{M}{P_k}\\
&\textmd{~~s.t.~}
\frac{P_k\left|\mathbf{h}_k^{T}\mathbf{v}_k\right|^2}{\sum\limits_{l\neq
k}{P_l\left|\mathbf{h}_k^{T}\mathbf{v}_l\right|^2}+1}\geq
\gamma_k,~~k=1,2,\cdots,M\nonumber
\end{align}
where the receiver noise power is assumed to be $1$ for all users.

A suboptimal solution for problem \eqref{CSI-sumpower} is to use zero-forcing (ZF) beamforming vectors $\mbf{v}_k$ to eliminate the interference and find the power values $P_k$ that satisfy the constraints with equality. This solution is asymptotically optimal in the high SNR regime \cite{jindal05,yoo06}. Clearly, for the zero-forcing solution to be applicable, the users' vector channels need to be linearly independent. Since the channels are assumed to be independent random vectors, this condition is satisfied almost surely, i.e., with probability one.

An important matter to consider with this solution is that the transmission powers would need to be extremely high when the users' channels are closely aligned, as the ZF beamforming vectors would be almost perpendicular to the corresponding channels in such cases. Therefore, it is not possible to always satisfy the SINR constraints with a bounded average power and as a result, a certain degree of outage must be tolerated by the users.

To see this rigorously, define $\theta_k=\angle(\mathbf{h}_k,\mathbf{H}_{-k})$, where $0\leq\theta_k\leq\frac{\pi}{2}$, and $\mathbf{H}_{-k}=\textmd{span}(\{\mathbf{h}_l|l\neq k\})$. For zero-forcing beamforming vectors $\bv_k$ we have $\angle(\bv_k,\bh_k){=}\frac{\pi}{2}{-}\theta_k$. Assume that the users' channels are i.i.d. with uniformly distributed directions and independent channel magnitudes (with arbitrary distributions). The average sum power of the zero-forcing method is given by
\begin{align}
P_{_{\md{MU,CSI}}}{=}\sum_{k{=}1}^M{\mathbb{E}\left[\frac{\gamma_k}{\left|\mathbf{h}_k^{T}\mathbf{v}_k\right|^2}\right]}
{=} \sum_{k{=}1}^{M}{\gamma_k} \mathbb{E}\left[\frac{1}{\|\mathbf{h}_k\|^2}\right]\mathbb{E}\left[\frac{1}{\sin^2(\theta_k)}\right]. \nonumber
\end{align}
As $\theta_k$ is uniformly distributed in $[0,\frac{\pi}{2}]$, the expectation
of $1/\sin^2(\theta_k)$ becomes unbounded.

To avoid unbounded transmit power, the users should tolerate certain degrees of outage. A reasonable approach is to declare outage for user $k$, i.e. set $P_k=0$, when \[0\leq\theta_k<\theta_k^{\circ},\] where $\theta_k^{\circ}\ll 1$ is the smallest acceptable angle between $\mbf{h}_k$ and $\mbf{H}_{-k}$. With this assumption the average sum power is given by
\begin{align}\label{P-CSI}
P_{_{\md{MU,CSI}}}&=\sum_{k=1}^{M}{\gamma_k\mathbb{E}{\left[{1}/{\|\mathbf{h}_k\|^2}\right]}
\frac{1}{\pi/2}\int_{\theta_k^\circ}^{\pi/2} \frac{\md{d}\theta_k}{\sin^2\theta_k}}\nonumber\\
&=\frac{2}{\pi}\sum_{k=1}^{M}{\gamma_k
\mathbb{E}{\left[{1}/{\|\mathbf{h}_k\|^2}\right]}\cot \theta_k^{\ci}}\approx
\frac{2\rho_{_{\md{MU,CSI}}}}{\pi}\sum_{k=1}^{M}{\frac{\gamma_k}{\theta_k^{\ci}}},
\end{align}
where the approximation holds for $\theta_k^{\circ}\ll 1$ and \be{\label{rho-CSI}}\rho_{_{\md{MU,CSI}}}\defined\mathbb{E}{\left[{1}/{\|\mathbf{h}_k\|^2}\right]}\ee for i.i.d. users. The corresponding outage probabilities are $p_{\textmd{out},k}={2\theta_k^{\ci}}/{\pi}\nonumber$ for $1\leq k\leq M$.

Having studied the perfect CSI system, the next section describes a general framework for the limited-feedback system design. The insights achieved by studying this general form are used in the later sections for system design and optimization with product channel quantization codebooks.

\section{System Design Problem and Vector Channel Quantization: General Form} \label{S_Se}
To clarify the arguments, we start by some basic definitions. By a \emph{vector channel quantization codebook} $\mc{C}$ of size $N$, we mean a partition of $\ds{R}^M$ into $N$ disjoint \emph{quantization regions} $S^{(n)}$, $1{\leq} n{\leq} N$:\[\mc{C}{=}\{S^{(1)},S^{(2)},\cdots,S^{(N)}\}.\] For every quantization codebook $\mc{C}$, we also define a \emph{quantization function} \[\mc{S}(\mbf{h}):\ds{R}^M\rightarrow\mc{C},\] which returns the quantization region that $\mbf{h}\in\ds{R}^M$ belongs to.

Now, for each user $1{\leq} k{\leq} M$, associate a codebook $\mc{C}_k$ of size $N_k$ and the corresponding quantization function $\mc{S}_k(\mbf{h}_k)$, where $\mbf{h}_k$ is the $k$th user's channel. Further, define the ordered $M$-tuples
\begin{align}
&\mbf{H}\defined[\bh_1^T,\bh_2^T,\cdots,\bh_M^T]\in \ds{R}^{M^2},\nonumber\\
&{\mbf{S}}(\mbf{H})\defined[\mc{S}_1(\bh_1),\mc{S}_2(\bh_2),\cdots,\mc{S}_M(\bh_M)]\in \prod_{k=1}^{M}{\mc{C}_k}.\nonumber
\end{align}

For a given total number of quantization (feedback) bits $B$, target SINR values $\gamma_k$, and target outage probabilities $q_k$, the system design problem is formulated as follows:
\begin{align}\label{general-opt}
&\min\limits_{\substack{\mc{C}_k,N_k,\\P_k(\stupe),\\\bv_k(\stupe)}} \mathbb{E}_{\mbf{H}}\left[\sum_{k=1}^{M}{P_k(\stupe)}\right]\\
&~~~~ \textmd{s.t.} \quad \prod_{k=1}^{M}N_k= 2^B,\nonumber\\
&\qquad\qquad \md{prob}\left[\frac{P_k(\stupe)\left|\mathbf{h}_k^{T}\mathbf{v}_k(\stupe)\right|^2}{\sum\limits_{l\neq
k}{P_l(\stupe)\left|\mathbf{h}_k^{T}\mathbf{v}_l(\stupe)\right|^2}{+}1} {<} \gamma_k\right] {\leq}q_k,\nonumber \nonumber\\&\hspace{2.5in}k=1,2,\cdots,M\nonumber
\end{align}
where the optimization is over the quantization codebooks $\mc{C}_k$, codebook sizes $N_k$, the power control functions $P_k(\stupe):\prod_{k=1}^{M}{\mc{C}_k} \rightarrow \ds{R}_{+}$, and the beamforming functions $\bv_k(\stupe):\prod_{k=1}^{M}{\mc{C}_k}\rightarrow \mathfrak{U}_M$, where $\mathfrak{U}_M$ is the unit hypersphere in $\ds{R}^M$.

An exact solution to this problem is intractable. Our approach in simplifying the problem is to fix the outage scenarios in advance and transform the design problem to a robust design problem that guarantees the target SINR's for the no-outage scenarios.

Define the \emph{outage region} $\Omega_k\subset\prod_{k}{\mc{C}_k}$ for user $k$ such that $\md{prob}[\stupe\in\Omega_k]=q_k$. Also define $I_k(\stupe)$ as the \emph{activity flag} for user $k$: \[I_k(\stupe)=\mc{I}(\stupe\in\Omega_k^c),\] where $\mc{I}(\cdot)$ is the logic true function. Whenever a user's channel resides outside the user's predefined outage region, the activity flag is on and the user must be served by the base-station, i.e. the user should not face an outage.

Let us fix the codebook sizes $N_k$ for now. For a robust system design, we need to design the codebooks, the power control functions, and the beamforming functions such that the target SINR's are guaranteed whenever $I_k(\stupe){=}1$:
\begin{align}\label{robust-design}
&\min\limits_{\substack{\mc{C}_k,\\P_k(\stupe),\\\bv_k(\stupe)}} \mathbb{E}_{\mbf{H}}\left[\sum_{k=1}^{M}{P_k(\stupe)}\right]\\
&\textmd{s.t.} {\inf_{\bw{\in} \mc{S}_k(\bh_k)}}\frac{P_k(\stupe)\left|\mathbf{w}^{T}\mathbf{v}_k(\stupe)\right|^2}{\sum\limits_{l\neq
k}{P_l(\stupe)\left|\mathbf{w}^{T}\mathbf{v}_l(\stupe)\right|^2}{+}1} \geq \gamma_kI_k(\stupe), \nonumber\\& \hspace{1.in} \forall~ \mbf{H}\in\ds{R}^{M^2}~\md{and}~k=1,2,\cdots,M\label{robust_constraint}
\end{align}
Note that by including the activity flag in the constraint \eqref{robust_constraint}, this formulation guarantees the target SINR when $I_k=1$  and returns $P_k{=}0$ when $I_k{=}0$. Also note that the activity flags are fixed in advance such that $\md{prob}[I_k(\stupe){=}0]{=}q_k$.


The design problem in \eqref{robust-design} is a complicated problem. In order to achieve a tractable reformulation, we accept two main simplifying assumptions (Assumptions A5 and A6 in Section \ref{system_model}):
\begin{itemize}
\item We assume a product structure for the channel quantization codebook, where the channel magnitude and the channel direction are quantized independently. Such a product structure, also known as shape-gain quantization in the literature \cite{gersho}, provides several practical advantages including faster quantization and lower storage requirement for the quantization codebooks. Product codebook structures are also shown to be sufficient structures for effective interference management in multi-user systems \cite{icc}.
\item We assume a fixed zero-forcing design for the following reason. Let us fix the quantization codebooks $\mc{C}_k$ and consider a time snapshot of the optimization problem in \eqref{robust-design} with a fixed channel realization $\mbf{H}$ and fixed quantized information $\stupe$. For this specific point in time, we intend to optimize the beamforming vectors $\bv_k$ and power levels $P_k$ such that the worst-case SINR conditions in \eqref{robust_constraint} are satisfied. It can be shown that this problem is non-convex with respect to $P_k$ and $\bv_k$ and therefore difficult to solve in general. However, if one fixes the vectors $\bv_k$, the problem would be convex in terms of $P_k$, since the objective and constraints are linear in $P_k$. Similar observations are made in \cite{icc3,icc5,icc7}. In order to preserve the convexity structure, we therefore assume that the beamforming vectors $\mbf{v}_k(\stupe)$ are fixed as zero-forcing vectors for the quantized channel directions. This is to mimic the zero-forcing beamforming with perfect CSI, where the base-station knows the exact channel directions. The exact definition of the quantized channel directions is provided later in Section \ref{S_Panj-A-2}.

\end{itemize}

With these simplifying assumptions, the robust design problem reduces to the following subproblems: 1) optimizing the power control function for fixed beamforming vectors and codebook structures; 2) optimizing the product codebook structure itself. The following sections address these subproblems.


\section{Product Quantization Codebook Structure} \label{S_Panj}
In this section, we describe the product quantization codebook structures and specify the corresponding outage regions. To be more exact, for a given target outage probability $q_k$, we specify the magnitude and direction outage regions such that \be\label{tot-outage} q_k=\dot{q}_k+\ddot{q}_k, \nonumber\ee
where the \emph{magnitude outage probability} $\dot{q}_k$ is the probability that the channel magnitude resides in the specified \emph{magnitude outage region} and the \emph{direction outage probability} $\ddot{q}_k$ is the probability that the channel direction resides in the specified \emph{direction outage region}.

\subsection{Magnitude Quantization Codebook and Magnitude Outage Region} \label{S_Panj-A-1}
For each user $1\leq k\leq M$, we use a magnitude quantization codebook \[\mbb{Y}_k=\left\{y_{k}^{(1)},y_k^{(2)},\cdots,y_k^{(\NkM)}\right\}\]
for quantizing the channel magnitude squared $Y_k\defined\|\bh_k\|^2$. Here $y_k^{(n)}$ are the quantization levels and $\NkM$ is the magnitude codebook size.

For a given \emph{magnitude outage probability} $\dot{q}$, we define the \emph{magnitude outage region} as the leftmost quantization interval $\big[0,y_k^{(1)}\big)$. The first quantization level is therefore fixed as
\be\label{y1} y_{k}^{(1)}=F^{-1}(\dot{q}_{k}),\ee
where $F^{-1}(\cdot)$ is the inverse cumulative distribution function (cdf) of $Y_k\defined\|\bh_k\|^2$.

We further define $\dot{\mc{C}}_{k}$ as the set of magnitude quantization regions, i.e. the set of quantization intervals for $\|\bh_k\|=\sqrt{Y_k}$:
\be\label{M-code}\dot{\mc{C}}_{k}=\left\{J_k^{(1)},J_k^{(2)},\cdots,J_k^{(\NkM)}\right\},\ee
where $J_k^{(n)}=\left[\left.\sqrt{y_{k}^{(n)}},\sqrt{y_{k}^{(n+1)}}\right)\right.$ and $y_k^{(\NkM+1)}\defined\infty$. Note that the definition uses the square root of the levels as the quantization levels $y_{k}^{(n)}$ are defined for quantizing $\|\bh_k\|^2$.

Finally, for $Y_k\geq y_{k}^{(1)}$, we define the \emph{quantized magnitude} $\tilde{Y}_k$ as the quantization level in $\mbb{Y}_k$ that is in the immediate left of $Y_k$, i.e.,
\be\label{quant-mag} \tilde{Y}_k=y_k^{(n)} ~~~~ \md{if} ~~~~ y_k^{(n)}\leq Y_k <y_k^{(n+1)}.\ee

For reasons that are clarified later in Section \ref{S_Shish}, we are interested in a magnitude quantization codebook that minimizes $\mathbb{E}\left[{1}/{\tilde{Y}_k}\right]$. It is shown in \cite{PartI} that the optimal codebook with such a criterion is uniform (in dB scale) in the asymptotic regime where $\NkM\rightarrow\infty$. We denote such an optimal codebook by $\mbb{Y}^\star_k$ and refer to it as the \emph{uniform magnitude quantization codebook} in the remainder of this paper.

The work in \cite{PartI} further shows that the uniform codebook $\mbb{Y}^\star_k$ satisfies the following upper bound:
\be\label{Pavescale-mu}
\mathbb{E}\left[\frac{1}{\tilde{Y}^\star_k}\right]<\rho_{_{\md{MU,CSI}}} \left(1+\NkM^{-\zeta_k\left(\NkM\right)}+\omega\NkM^{-2\zeta_k\left(\NkM\right)}\right),
\ee
where \[\rho_{_{\md{MU,CSI}}}=\mbb{E}[{1}/{Y_k}]=\mbb{E}[{1}/{\|\bh_k\|^2}]\]
as defined in \eqref{rho-CSI}. On the left hand side of $\eqref{Pavescale-mu}$, the variable $\tilde{Y}^\star_k$ is the quantized magnitude variable associated with the uniform codebook $\mbb{Y}^\star_k$. On the right-hand side of \eqref{Pavescale-mu}, \[\omega\defined \frac{\mbb{E}[Y_k]}{\eta^2\mbb{E}[{1}/{Y_k}]},\] where \[\eta=\lim_{y\rightarrow\infty}{{-f(y)}/{f'(y)}}\] and $f(\cdot)$ is the probability density function (pdf) of $Y_k$. The function $\zeta_k(n)$ depends on the magnitude outage probability $\dot{q}_k$ and is defined as the solution to the following equation:
\[n^{-\zeta_k(n)}\left(1+n^{-\zeta_k(n)}\right)^{n-1}=\frac{\eta}{y_{k}^{(1)}},\]
where $y_{k}^{(1)}=F^{-1}(\dot{q}_k)$ as defined earlier. It can be shown that for any $\dot{q}_k>0$ and hence $y_{k}^{(1)}>0$, we have
\be\label{zeta-limit} \lim_{n\rightarrow\infty}{\zeta_k(n)}=1.\ee

The bound in \eqref{Pavescale-mu} and the limit in \eqref{zeta-limit} are used in Section \ref{S_Shish} for optimization of the product channel quantization codebook structure.

\subsection{Direction Quantization Codebook and Direction Outage Region} \label{S_Panj-A-2}
For each user $1{\leq} k{\leq} M$, we use a Grassmannian codebook $\mbb{U}_k$ of size $\ND_k$ for direction quantization: \be\label{grass-code-mu} \mbb{U}_k=\left\{\bu_k^{(1)},\bu_k^{(2)},\cdots,\bu_k^{(\ND_k)}\right\},\ee where $\bu_k^{(n)}$ vectors are $M$-dimensional unit-norm Grassmannian codewords.

Every channel realization $\bh_k$ is mapped to a vector $\tilde{\bu}_k(\bh_k)\in\mbb{U}_k$ that has the smallest angle with $\bh_k$:
\be\label{dir-quant-vec-mu} \tilde{\bu}_k(\bh_k)=\arg\min_{\bu\in\mbb{U}_k} \angle{(\bh_k,\bu)}.\ee
The vector $\tilde{\bu}_k$ is referred to as the \emph{quantized direction} for the channel realization $\bh_k$. The corresponding quantization regions, according to the Gilbert-Varshamov argument \cite{Barg}, can be covered by the following spherical caps:
\be\label{dir-quant-regions-mu} \ddot{\mc{C}}_k=\left\{B_k^{(1)},B_k^{(2)},\cdots,B_k^{(\ND_k)} \right\}, \ee
where \[B_k^{(n)}=\left\{\mbf{w}\in\mathfrak{U}_M\left|\angle{(\mbf{w},\bu_k^{(n)})}<\phi_k\right.\right\}\] is the spherical cap around $\bu_k^{(n)}$ as shown in Fig. \ref{Fig0}. In this definition, \[\phi_k=\arcsin{\delta_k}\] is the angular opening of the caps and $\delta_k$ is the minimum chordal distance of $\mbb{U}_k$. It should be noted that covering the direction quantization regions with the spherical caps enlarges the quantization regions. By considering the constraint \eqref{robust_constraint} in the robust design problem \eqref{robust-design}, such enlargement of the regions will lead to an upper bound for the average transmission power.

\begin{figure}
\centering
\includegraphics[width=1in]{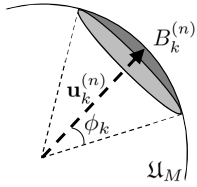}
\caption{Spherical cap $B_k^{(n)}$ around $\bu_k^{(n)}$ on the unit hypersphere $\mathfrak{U}_M$. }
\label{Fig0}
\end{figure}

In order to describe the \emph{direction outage regions}, define \[\theta_k=\angle(\tilde{\mathbf{u}}_k,\tilde{\mathbf{U}}_{-k}),\] where \[\tilde{\mathbf{U}}_{-k}=\textmd{span}(\{\tilde{\bu}_l|l\neq k\})\] and $\tilde{\bu}_k$ is the quantized direction for user $k$. This is a similar definition as in Section \ref{S_Do}, except that the exact channels $\mbf{h}_k$ are replaced with the quantized directions $\tilde{\mbf{u}}_k$. Similar to the discussion in Section \ref{S_Do} for the perfect CSI system, we say that user $k$ is in \emph{direction outage} if \[0\leq \theta_k<\theta_k^\circ,\] where $\theta_k^\circ$ is the minimum acceptable angle between $\tilde{\bu}_k$ and ${\mbf{U}}_{-k}$. This implicitly defines the direction outage regions of the users. Finally, assuming that $\theta_k$ is approximately uniform in $[0,\pi/2]$\footnote{This holds if the channel direction are uniformly distributed (Assumption A3 in Section \ref{system_model}) and the codebooks $\mbb{U}_k$ undergo sufficient random rotations.}, the \emph{direction outage probability} is given by  \be\label{dir-out-prob-mu} \ddot{q}_k\approx\frac{2}{\pi}\theta_k^\circ.\ee

In Section \ref{S_Shish}, which addresses the product codebook optimization, we will need the following inequality, which describes the dependence between the angular opening $\phi_k$ and the direction codebook size $\NkD$:
\be\label{phi-bound-mu} \phi_k\approx\sin\phi_k<4\lambda_M\NkD^{-\frac{1}{M-1}},\ee
where $\lambda_M{=}\left({\sqrt{\pi}\Gamma((M{+}1)/2)}/{\Gamma({M/2})}\right)^{\frac{1}{M{-}1}}$. This inequality holds for large enough values of $\NkD$ and its proof is presented in \cite{PartI}. The approximation on the left-hand side of \eqref{phi-bound-mu} assumes $\phi\ll1$, which is justified by the high resolution assumption $\NkD\gg1$ (Assumption A1 and A5 in Section \ref{system_model}).

\subsection{Product Codebook Structure} \label{S_Panj-A-3}
Using our definitions of the magnitude and direction quantization regions, we define product channel quantization codebook $\mc{C}_k$ for each user $k$ as follows:
\be\label{product-code-mu} \mc{C}_k=(\dot{\mc{C}}_k\times\ddot{\mc{C}}_k) \cup \mc{O}_k, \ee
where $\dot{\mc{C}}$ and $\ddot{\mc{C}}$ are the magnitude and direction quantization regions in \eqref{M-code} and \eqref{dir-quant-regions-mu}, and \[\mc{O}_k=\left\{\bh\left|\|\bh\| < \sqrt{y_k^{(1)}}\right.\right\}\]
is a ball centered at origin corresponding to the magnitude outage region.

Based on the definitions of the magnitude and direction outage regions, the activity flag for user $k$ is given by  \be\label{flag2} I_k=\mc{I}\left(\theta_k\geq\theta_k^\circ ~\wedge~ \|\bh\|\geq\sqrt{y_k^{(1)}}\right),\ee where $\mc{I}(\cdot)$ is the logic true function, and $\theta_k$, $\theta_k^\circ$, and $y_k^{(1)}$ are defined in Sections \ref{S_Panj-A-1} and \ref{S_Panj-A-2}.

According to the activity flag expression in \eqref{flag2}, the outage event for user $k$ can be expressed as the union of the magnitude outage event, corresponding to the channel magnitude lying in the magnitude outage region, and the direction outage event, corresponding to the channel direction lying in the direction outage region. By using the union probability formula we have
\be
\md{prob}[I_k=0] \leq \dot{q}_k+\ddot{q}_k \approx F(y_{k}^{(1)})+\frac{2}{\pi}\theta_k^\circ, \nonumber
\ee
where $\dot{q}_k$ and $\ddot{q}_k$ are the magnitude and direction outage probabilities respectively and the approximation follows from \eqref{dir-out-prob-mu}. In order to satisfy the target outage probability $q_k$, we therefore impose the following constraint\footnote{Note that this constraint is stronger than $\md{prob}[I_k=0]\leq q_k$ and would therefore lead to an upper bound on the objective function (average sum power). This agrees with the direction of our analysis in achieving an upper bound for the sum power.}:
\be\label{total-outage}
F(y_{k}^{(1)})+\frac{2}{\pi}\theta_k^\circ \leq q_k.
\ee

Finally, the product channel quantization codebook size is given by \be\label{code-size} |\mc{C}_k|=N_k=\NkM\NkD+1,\ee
where $\NkM$ and $\NkD$ are the magnitude and direction codebook sizes respectively.

To summarize, for a given target outage probability $q_k$ and a given codebook size $N_k$, we propose a product quantization structure and specify its outage regions and the corresponding activity flags such that $\md{prob}[I_k=0]=q_k$. The proposed product codebook structure is parameterized by the magnitude and direction codebook sizes $\NkM$ and $\NkD$, the minimum acceptable channel magnitude $y_k^{(1)}$, and the minimum acceptable directional separation $\theta_k^\circ$.

\section{Optimization of the Power Control Function with Sector-Type Channel Uncertainty Regions} \label{S_Char}
Assuming the product codebook structure in Section \ref{S_Panj}, this section addresses the optimization of the power control function. For this purpose, we fix the quantization codebooks $\mc{C}_k$ and the corresponding outage regions. Furthermore, as mentioned in Section \ref{S_Se}, we make the simplifying assumption that the beamforming vectors are the zero-forcing vectors for the quantized directions.

For the product quantization codebooks considered in this paper, the quantization (or channel uncertainty) regions are sector-type regions as shown in Fig. \ref{Fig1}. A sector-type region is parameterized as \[S(R,r,\tilde{\mbf{u}},\phi)=\left\{\left.\mbf{h}\in \ds{R}^M \right|\sqrt{r}\leq \|\mbf{h}\| < \sqrt{R}, ~ \angle(\mbf{h},\tilde{\mbf{u}})<\phi \right\},\]
where in the terminology of Section \ref{S_Panj}, $\tilde{\mbf{u}}$ is the \emph{quantized direction} and $r$ is the \emph{quantized magnitude}, which is denoted as $\tilde{Y}$ in \eqref{quant-mag}.

\begin{figure}
\centering
\includegraphics[width=1.1in]{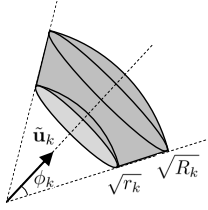}
\caption{Sector-type channel uncertainty region $S_k=S(R_k,r_k,\tilde{\mathbf{u}}_k,\phi_k)$ for user $k$.}
\label{Fig1}
\end{figure}

For a specific point in time, consider the channel realizations $\mbf{H}=[\bh_1^T,\bh_2^T,\cdots,\bh_M^T]$ and the corresponding quantization (or channel uncertainty) regions \[S_k=S(R_k,r_k,\tilde{\mathbf{u}}_k,\phi_k)\defined \mc{S}_k(\bh_k),\] where the \emph{quantization functions} $\mc{S}_k(\bh_k)$, $1{\leq} k{\leq} M$, are defined in Section \ref{S_Se}. Also, let $I_k$ denote the corresponding activity flags.

The goal is to optimize the power control function for the robust design problem \eqref{robust-design}. Therefore, for the current channel realizations $\mbf{H}$, we have to find the transmission power levels $P_k$ that minimize the instantaneous sum power subject to the worst-case SINR constraints:
\begin{align}\label{robust-prob}
&\min\limits_{P_k} \sum_{k=1}^{M}{P_k}\\
&~\textmd{s.t.} \inf_{\mathbf{w}\in
S_k}\frac{P_k\left|\mathbf{w}^{T}\mathbf{v}_k\right|^2}{\sum\limits_{l\neq
k}{P_l\left|\mathbf{w}^{T}\mathbf{v}_l\right|^2}+1}\geq
\gamma_k I_k,~~k{=}1{,}2{,}{\cdots}{,}M\label{robust-const}
\end{align}
where the beamforming vectors $\mbf{v}_k$ are fixed, since the quantized directions $\tilde{\mathbf{u}}_k$ are fixed.
Let us refer to the users with $I_k=0$ as the \emph{silent} users and the users with $I_k=1$ as the \emph{active} users and let the set $\mc{K}$ denote the set of active users: $\mc{K}=\{1{\leq} k{\leq} M| I_k=1\}$. In general $\mc{K}$ is a random set depending on the channel realizations and the specified outage regions. Now, considering the power control problem in \eqref{robust-prob}, we note that if a user $k$ is silent, i.e. $I_k=0$, the corresponding SINR constraint in \eqref{robust-const} is redundant as the problem returns $P_k=0$ for such a user. We therefore confine the SINR constraints in \eqref{robust-const} to the set of active users, i.e. the indices $k\in\mc{K}$.

According to the robust design formulation in \eqref{robust-design}, for any channel realization $\mbf{H}=[\bh_1^T,\bh_2^T,\cdots,\bh_M^T]$, the base-station is required to serve (guarantee the target SINR's for) all active users $k\in\mc{K}$. The power control problem in \eqref{robust-prob} therefore must be feasible for the active users. The following theorem presents a sufficient condition that guarantees feasibility.

\begin{theorem}\label{T_Haft}
To ensure the feasibility of the robust power control problem in \eqref{robust-prob}, it is sufficient to have the following for all $1\leq k\leq M$:
\be\label{MQCS}
\NkD\geq \left({4\lambda_M}\Big{/}{\sin\left(\arctan\left(\frac{\sin{\theta_k^\circ}}{1+\sqrt{(M-1)\gamma_k}} \right)\right)}\right)^{M-1}.
\ee
\end{theorem}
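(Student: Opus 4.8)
The plan is to reduce the robust power-control feasibility in \eqref{robust-prob} to a purely directional interference-coupling condition, and then to translate that condition into a lower bound on the direction codebook size $\NkD$ by means of the covering bound \eqref{phi-bound-mu}. First I would record the geometry of the zero-forcing beamformers relative to the sector $S_k=S(R_k,r_k,\tilde{\bu}_k,\phi_k)$. Writing any $\bw\in S_k$ as $\|\bw\|\hat{\bw}$ with $\angle(\hat{\bw},\tilde{\bu}_k)<\phi_k$, two angle bounds will drive everything, both following from the triangle inequality of the acute-angle metric $\angle(\cdot,\cdot)$. For the desired signal, the zero-forcing construction gives $\angle(\bv_k,\tilde{\bu}_k)=\pi/2-\theta_k$ (the quantized-direction analogue of the relation in Section \ref{S_Do}), so $\angle(\hat{\bw},\bv_k)<\pi/2-(\theta_k-\phi_k)$ and hence $|\hat{\bw}^T\bv_k|>\sin(\theta_k-\phi_k)$, which is strictly positive once $\theta_k>\phi_k$. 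For the leakage, $\bv_l\perp\tilde{\bu}_k$ for every $l\neq k$ forces $\angle(\hat{\bw},\bv_l)>\pi/2-\phi_k$, hence $|\hat{\bw}^T\bv_l|<\sin\phi_k$.

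The crucial step is to divide user $k$'s constraint \eqref{robust-const} by $|\bw^T\bv_k|^2$ and to note that the resulting interference coupling $|\bw^T\bv_l|^2/|\bw^T\bv_k|^2=|\hat{\bw}^T\bv_l|^2/|\hat{\bw}^T\bv_k|^2$ is independent of the channel magnitude, while the noise contributes only the bounded term $\gamma_k/|\bw^T\bv_k|^2$. Imposing each divided constraint for all $\bw\in S_k$ and bounding the supremum of a sum by the sum of suprema yields the conservative affine system $P_k\geq\gamma_k\sum_{l\neq k}c_{kl}P_l+d_k$, with $c_{kl}=\sup_{\bw\in S_k}|\hat{\bw}^T\bv_l|^2/|\hat{\bw}^T\bv_k|^2$ and a finite positive $d_k=\gamma_k/\inf_{\bw\in S_k}|\bw^T\bv_k|^2$. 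Because $d_k$ is finite, feasibility of this relaxed system—and therefore of \eqref{robust-prob}—is governed solely by the coupling: a positive power vector exists whenever the nonnegative matrix $[\gamma_k c_{kl}]$ has spectral radius below one. Inserting the two gain bounds gives $c_{kl}<\sin^2\phi_k/\sin^2(\theta_k-\phi_k)$; since the spectral radius is dominated by the maximum absolute row sum and each row has at most $M-1$ nonzero entries (the other active users), it suffices that $(M-1)\gamma_k\sin^2\phi_k<\sin^2(\theta_k-\phi_k)$, i.e. $\sqrt{(M-1)\gamma_k}\,\sin\phi_k<\sin(\theta_k-\phi_k)$, for every $k$.

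It then remains to make this uniform over the active set and to convert it into a bound on $\NkD$. Using $\theta_k\geq\theta_k^{\circ}$ (the worst case being $\theta_k=\theta_k^{\circ}$), expanding $\sin(\theta_k^{\circ}-\phi_k)$, and rearranging gives $\tan\phi_k<\sin\theta_k^{\circ}/(\cos\theta_k^{\circ}+\sqrt{(M-1)\gamma_k})$; replacing $\cos\theta_k^{\circ}$ by its upper bound $1$ yields the slightly stronger $\phi_k<\arctan\!\big(\sin\theta_k^{\circ}/(1+\sqrt{(M-1)\gamma_k})\big)$. Applying the increasing function $\sin(\cdot)$ and then invoking \eqref{phi-bound-mu}, namely $\sin\phi_k<4\lambda_M\NkD^{-1/(M-1)}$, it is enough to force $4\lambda_M\NkD^{-1/(M-1)}$ below $\sin\arctan\!\big(\sin\theta_k^{\circ}/(1+\sqrt{(M-1)\gamma_k})\big)$; solving this for $\NkD$ reproduces exactly the bound \eqref{MQCS}.

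I expect the feasibility characterization to be the main obstacle. One must argue carefully that replacing the true joint worst case over $\bw$ by the sum-of-suprema relaxation produces a genuinely sufficient condition, that dividing by $|\bw^T\bv_k|^2$ legitimately removes the magnitude (and hence the ratio $R_k/r_k$) from the coupling matrix while the magnitude survives only in the harmless term $d_k$, and that the maximum-row-sum bound on the spectral radius is the correct lever for obtaining a decoupled, per-user closed form. The subsequent trigonometric rearrangement and the substitution of \eqref{phi-bound-mu} are routine.
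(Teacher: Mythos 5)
Your proposal is correct and follows essentially the same route as the paper's own proof: the same sector-geometry gain bounds ($|\hat{\bw}^T\bv_k|>\sin(\theta_k-\phi_k)$ for the signal and $|\hat{\bw}^T\bv_l|<\sin\phi_k$ for the leakage), the same per-user sufficient condition $\sqrt{(M-1)\gamma_k}\,\sin\phi_k<\sin(\theta_k-\phi_k)$, and the same final conversion to \eqref{MQCS} via $\theta_k\geq\theta_k^\circ$, the relaxation $\cos\theta_k^\circ\leq 1$, and the covering bound \eqref{phi-bound-mu}. The only difference is the feasibility certificate: the paper verifies that its closed-form solution \eqref{Pub} is valid by showing the angular condition forces $\alpha_k<1/M$ and hence $\sum_{k\in\mc{K}}\alpha_k<1$, while you invoke the Perron--Frobenius spectral-radius criterion with a maximum-row-sum bound for the linear interference system --- an equivalent argument, since $\alpha_k<1/M$ is algebraically identical to your row-sum condition $(M-1)\gamma_k\sin^2\phi_k<\sin^2(\theta_k-\phi_k)$.
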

\begin{proof}
See Appendix \ref{A_1}.
\end{proof}
This condition is referred to as the minimum quantization codebook size (MQCS) condition in the remainder of this paper.

In the high resolution regime, where the codebook sizes tend to infinity, the MQCS conditions are clearly satisfied; therefore, feasibility of the power control problem is not an issue as far as the high resolution analysis is considered. This condition however plays a key role in finding the minimum number of feedback bits $B$ for which the asymptotic bit allocation laws are applicable. This issue is discussed in further detail in Theorem \ref{T_Dah} of Section \ref{S_Shish}.

We are now ready to solve the power control problem in \eqref{robust-prob}. An exact numerical solution to this problem can be obtained by transforming it into a semidefinite programming (SDP) problem as described in the following.

\begin{theorem}\label{T_Shish}
The problem in \eqref{robust-prob} is equivalent to the following SDP problem for $M\geq3$:
\begin{align}\label{SDP}
&\!\min\limits_{P_k,\lambda_k,\mu_k} \sum_{k\in\mc{K}}{P_k}\\
&~~\textmd{s.t.} \quad \frac{1}{\gamma_k}P_k\mbf{v}_k\mbf{v}_k^T{-}\!\sum_{l\neq k}{P_l\mbf{v}_l\mbf{v}_l^T}{\succeq} (\lambda_k{-}\mu_k)\mbf{I}_M{+}\frac{\mu_k}{\cos^2{\phi_k}}\tilde{\mathbf{u}}_k\tilde{\mathbf{u}}_k^T\nonumber\\
&\qquad ~~~ \lambda_k\geq \frac{1}{r_k}, ~ \mu_k\geq 0, ~ P_k\geq 0,~ k\in\mc{K},\nonumber
\end{align}
where $\bf{I}_M$ is the $M{\times} M$ identity matrix.
\end{theorem}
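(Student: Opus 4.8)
The plan is to fix an active user $k\in\mathcal{K}$ and rewrite its worst-case SINR constraint as a single homogeneous quadratic inequality, then invoke the S-procedure to convert it into the linear matrix inequality appearing in \eqref{SDP}. First I would collect the terms of the constraint \eqref{robust-const} into the symmetric matrix
\[
Q_k \;=\; \frac{1}{\gamma_k}P_k\,\mathbf{v}_k\mathbf{v}_k^T \;-\; \sum_{l\neq k}P_l\,\mathbf{v}_l\mathbf{v}_l^T ,
\]
so that $\mathrm{SINR}_k(\mathbf{w})\geq\gamma_k$ is equivalent to $\mathbf{w}^T Q_k\mathbf{w}\geq 1$, and the robust constraint reads $\inf_{\mathbf{w}\in S_k}\mathbf{w}^T Q_k\mathbf{w}\geq 1$ over the sector $S_k=S(R_k,r_k,\tilde{\mathbf{u}}_k,\phi_k)$.

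Next I would dispose of the radial part of $S_k$. Since $\mathbf{w}\mapsto\mathbf{w}^T Q_k\mathbf{w}$ is positively homogeneous of degree two, any feasible assignment must render this form strictly positive on the whole angular cone $\mathcal{W}_k=\{\mathbf{w}:|\mathbf{w}^T\tilde{\mathbf{u}}_k|\geq\cos\phi_k\,\|\mathbf{w}\|\}$; otherwise the objective could be driven to $0$ or below by scaling $\|\mathbf{w}\|$ up toward $\sqrt{R_k}$, violating the bound $1$. Under that positivity the infimum over the radius is attained at the \emph{inner} radius $\|\mathbf{w}\|^2=r_k$, and passing from the open sector to its closure is harmless by continuity of the quadratic form. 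This reduces the constraint to the homogeneous implication: for all $\mathbf{w}$,
\[
\mathbf{w}^T\!\Big(\tilde{\mathbf{u}}_k\tilde{\mathbf{u}}_k^T-\cos^2\!\phi_k\,\mathbf{I}_M\Big)\mathbf{w}\geq 0
\;\Longrightarrow\;
\mathbf{w}^T\!\Big(Q_k-\tfrac{1}{r_k}\mathbf{I}_M\Big)\mathbf{w}\geq 0 .
\]

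The core step is the \emph{S-procedure}. With the Slater point $\mathbf{w}_0=\tilde{\mathbf{u}}_k$, for which $\mathbf{w}_0^T(\tilde{\mathbf{u}}_k\tilde{\mathbf{u}}_k^T-\cos^2\phi_k\mathbf{I}_M)\mathbf{w}_0=\sin^2\phi_k>0$, the implication above is \emph{losslessly} equivalent to the existence of a multiplier $\nu_k\geq 0$ with $Q_k-\tfrac{1}{r_k}\mathbf{I}_M\succeq\nu_k(\tilde{\mathbf{u}}_k\tilde{\mathbf{u}}_k^T-\cos^2\phi_k\mathbf{I}_M)$. Rearranging and substituting $\mu_k=\nu_k\cos^2\phi_k$ reproduces exactly $Q_k\succeq(\lambda_k-\mu_k)\mathbf{I}_M+\tfrac{\mu_k}{\cos^2\phi_k}\tilde{\mathbf{u}}_k\tilde{\mathbf{u}}_k^T$ with $\lambda_k=1/r_k$. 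Finally I would observe that relaxing $\lambda_k=1/r_k$ to $\lambda_k\geq 1/r_k$ does not enlarge the feasible set of power vectors, because increasing $\lambda_k$ only tightens the LMI; since the auxiliary variables $(\lambda_k,\mu_k)$ are private to each user and the objective $\sum_k P_k$ is untouched, running the argument simultaneously over all $k\in\mathcal{K}$ yields the claimed equivalence with \eqref{SDP}.

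The delicate points, where I expect the real work to lie, are the losslessness of the S-procedure — one quadratic constraint over the reals together with Slater's condition makes it exact — and the rigorous justification of the inner-radius reduction, including the sign case-analysis of $\mathbf{w}^T Q_k\mathbf{w}$ on $\mathcal{W}_k$ and the treatment of the open boundary of $S_k$. I would also want to pin down precisely where the hypothesis $M\geq 3$ enters; my expectation is that it is not needed for the S-lemma itself but either guarantees the non-degeneracy of the cone $\mathcal{W}_k$ or reflects that the low-dimensional cases are handled separately, and I would verify this explicitly.
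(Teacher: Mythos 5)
Your proof is correct, but it takes a genuinely different route from the paper's. The paper keeps \emph{both} the radial constraint $\|\mathbf{w}\|^2\geq r_k$ and the angular constraint as separate quadratic constraints, writes the robust SINR condition as a set containment $T_1\cap T_2\subset T_0$ among sublevel sets of quadratic forms, and then invokes Polyak's extension of the S-procedure to \emph{two} quadratic constraints, which is lossless only in dimension at least $3$ (given some combination $\nu_1 A_1+\nu_2 A_2\succ 0$); both multipliers $\lambda_k,\mu_k$ emerge from that theorem simultaneously, with $\lambda_k r_k\geq 1$ coming from the scalar condition $\tau_0\geq\lambda\tau_1+\mu\tau_2$. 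You instead eliminate the radial constraint analytically by homogeneity (the infimum over the sector sits at the inner radius $\sqrt{r_k}$, and the upper radius $\sqrt{R_k}$ is vacuous, just as the paper also observes), reducing everything to a homogeneous implication between two quadratic forms, and then apply the classical one-constraint homogeneous S-lemma with Slater point $\tilde{\mathbf{u}}_k$, pinning $\lambda_k=1/r_k$ and checking that relaxing to $\lambda_k\geq 1/r_k$ does not change the feasible power vectors. Both arguments are sound, and yours resolves your own closing question: $M\geq 3$ enters the paper's proof only because Polyak's two-constraint theorem requires it; since the one-constraint homogeneous S-lemma is exact in every dimension, your argument shows the equivalence actually holds for $M=2$ as well, so the dimensional hypothesis is an artifact of the paper's proof technique rather than of the statement. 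The trade-off is that the paper's route is more mechanical (no radial case analysis, no open-versus-closed cone bookkeeping), while yours is more elementary, uses a weaker and more classical tool, and yields a slightly stronger theorem.
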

\begin{proof}
The proof is based on the Polyak's theorem in \cite{polyak}. See Appendix \ref{A_2} for details.
\end{proof}

Although the SDP reformulation provides a numerically efficient solution, it does not give the minimized sum power in a closed form. The availability of a closed-form expression for sum power is crucial in optimizing the quantization codebook structures. Moreover, for practical systems where the base-station needs to continuously compute and update the transmission powers for each fading block, the SDP reformulation would be of a limited application. We therefore resort to a suboptimal solution to the problem in \eqref{robust-prob} that provides a closed-form upper bound to the sum power. This upper bound solution is used later in Section \ref{S_Shish} as the objective function for optimization of the users' quantization codebooks.

First, we bound the SINR terms as follows. For the sector-type regions $S_k=S(R_k,r_k,\tilde{\mathbf{u}}_k,\phi_k)$, we have
\begin{align}
\!\!\!\!\inf_{\mbf{w}\in S_k}\frac{P_k\left|\mbf{w}^{T}\mathbf{v}_k\right|^2}{\sum\limits_{l\neq k}{P_l\left|\mbf{w}^{T}\mathbf{v}_l\right|^2}+1}
&\stackrel{(\md{a})}{=}
\inf_{\mbf{w}\in S_k}\frac{P_k r_k|\mathbf{\hat{w}}^{T}\mathbf{v}_k|^2}{r_k\!\!\sum\limits_{l\neq k}{P_l|\mathbf{\hat{w}}^{T}\mathbf{v}_l|^2}{+}1}\nonumber\\
&{\geq} \frac{P_k r_k\inf_{\mbf{w}\in S_k}|\mathbf{\hat{w}}^{T}\mathbf{v}_k|^2}{r_k\!\!\sum\limits_{l{\neq} k}{P_l\sup_{\mbf{w}\in S_k}|\mathbf{\hat{w}}^{T}\mathbf{v}_l|^2}{+}1}\label{comp3}\\
&{\stackrel{(\md{b})}{=}} \frac{P_kr_k\sin^2{(\theta_k{-}\phi_k)}}{(\sum_{l{\neq}k}{P_l})r_k\sin^2{\phi_k}{+}1},\label{comp2}
\end{align}
where $\hat{\mbf{w}}=\mbf{w}/\|\mbf{w}\|$. The equality (a) holds since the SINR term is monotonic in $\|\mbf{w}\|$, i.e. the minimum occurs on the spherical boundary region $\|\bw\|{=}\sqrt{r_k}$ in Fig. \ref{Fig1}. The equality (b) holds since $\mbf{v}_k$'s are the zero-forcing directions for $\tilde{\mathbf{u}}_k$'s. To see this, consider $\theta_k=\angle(\tilde{\mathbf{u}}_k,\tilde{\mathbf{U}}_{-k})$ as defined earlier. By considering the zero-forcing principle, we have $\angle(\bv_k,\tilde{\bu}_k)=\frac{\pi}{2}-\theta_k$ and $\angle(\bv_l,\tilde{\bu}_k)=0$ for $l\neq k$. Now, noting the definition of the sector-type region $S_k$, we have \begin{align}
&\max_{\bw \in\mc{S}_k} {\angle(\bw,\bv_k)}= \frac{\pi}{2}-\theta_k +\phi_k,\nonumber\\
&\min_{\bw \in\mc{S}_k} {\angle(\bw,\bv_l)}= \frac{\pi}{2}-\phi_k.\nonumber
\end{align}
By substituting the cosine of these angles in the numerator and denominator of \eqref{comp3}, we achieve the final expression in \eqref{comp2}.

In order to obtain an upper bound on the sum power, we set the last term in \eqref{comp2} to be equal to $\gamma_k$: \[\frac{P_k r_k\sin^2{(\theta_k{-}\phi_k)}}{(\sum_{l{\neq}k}{P_l})r_k\sin^2{\phi_k}{+}1}=\gamma_k.\] This is a set of linear equations in $P_k$, $k\in \mc{K}$, where $\mc{K}$ is the set of active users. By solving these equations and computing $\sum_{k\in\mc{K}}{P_k}$, we achieve the following upper bound for the sum power:
\be\label{Pub}
P_{_{\md{MU}}}\defined\sum_{k\in\mc{K}}{P_k}=\frac{\sum_{k\in\mc{K}}{\alpha_k/\beta_k}}{1-\sum_{k\in\mc{K}}{\alpha_k}},
\ee
where $\alpha_k=\left(1+\frac{\sin^2{(\theta_k-\phi_k)}}{\gamma_k\sin^2{\phi_k}}\right)^{-1}$ and $\beta_k=r_k\sin^2{\phi_k}$. The subscript $\md{MU}$ in $P_{_\md{MU}}$ stands for \emph{multi-user}. The closeness of the upper bound solution in \eqref{Pub} and the solution to the SDP problem in \eqref{SDP} is verified numerically in Section \ref{S_Hasht}.

The upper bound in \eqref{Pub} is a bound on the instantaneous sum power for a single snapshot of channel realizations in time. As the users' channels change over time, the quantized magnitudes $r_k=\tilde{Y}_k$, the quantized directions $\tilde{\bu}_k$ and the corresponding angles $\theta_k=\angle(\tilde{\mathbf{u}}_k,\tilde{\mathbf{U}}_{-k})$ all change with time. The variables $\alpha_k$ and $\beta_k$ in \eqref{Pub} are therefore random variables. Since we are interested in the expected value of the sum power as the design objective in \eqref{robust-design}, we use the following sum-power upper bound approximation so that the expectation operation can be applied conveniently.
\begin{theorem}\label{T_Hasht}
In the asymptotic regime with large quantization codebook sizes and small values of $\phi_k$, we have
\be\label{PubApprx}
P_{_{\md{MU}}}=\sum_{k\in\mc{K}}{e_k}+\sum_{k\in\mc{K}}{f_k\phi_k}+\sum_{k\in\mc{K}}{o(\phi_k)},
\ee
where
\be\label{e-f-zeta}
e_k=\frac{\gamma_k}{r_k}(1+\zeta_k^2),~~~f_k=\frac{2\gamma_k}{r_k}(\zeta_k+\zeta_k^3),~~~\zeta_k=\cot\theta_k.~~
\ee
Here, the notation $g(\phi)=o(\phi)$, for an arbitrary function $g(\cdot)$, means that ${\lim_{\phi\rightarrow0}{g(\phi)/\phi}}=0$.
\end{theorem}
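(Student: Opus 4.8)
The plan is to treat the closed-form bound in \eqref{Pub} as a function of the small openings $\phi_k$ (for $k\in\mc{K}$, with $\theta_k$ and $r_k$ held fixed at the given snapshot) and expand it to first order. Writing $A=\sum_{k\in\mc{K}}\alpha_k/\beta_k$ and $S=\sum_{k\in\mc{K}}\alpha_k$, the bound reads $P_{_{\md{MU}}}=A/(1-S)$. The first observation I would record is that the ratio $\alpha_k/\beta_k$ collapses to a single clean expression, since the $\sin^2\phi_k$ factor in $\beta_k=r_k\sin^2\phi_k$ cancels the one hidden in $\alpha_k$; explicitly $\alpha_k/\beta_k=\gamma_k/\big(r_k D_k\big)$ with $D_k=\gamma_k\sin^2\phi_k+\sin^2(\theta_k-\phi_k)$. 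All of the $\phi_k$-dependence is then concentrated in the two scalars $D_k$ and $\alpha_k$, and the task reduces to expanding these.

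For the numerator $A$, I would Taylor-expand $D_k$ about $\phi_k=0$. The term $\gamma_k\sin^2\phi_k$ is $O(\phi_k^2)$ and hence contributes nothing at first order, so $D_k=\sin^2\theta_k-\sin(2\theta_k)\,\phi_k+o(\phi_k)$, using $\tfrac{d}{d\phi}\sin^2(\theta_k-\phi)\big|_{0}=-\sin 2\theta_k$. Inverting by a geometric series gives $1/D_k=\sin^{-2}\theta_k\big(1+2\cot\theta_k\,\phi_k+o(\phi_k)\big)$, whence $\alpha_k/\beta_k=\tfrac{\gamma_k}{r_k\sin^2\theta_k}\big(1+2\cot\theta_k\,\phi_k\big)+o(\phi_k)$. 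Substituting $\sin^{-2}\theta_k=1+\zeta_k^2$ and $\cot\theta_k=\zeta_k$ turns the leading coefficient into $e_k=\tfrac{\gamma_k}{r_k}(1+\zeta_k^2)$ and the first-order coefficient into $\tfrac{2\gamma_k}{r_k}\zeta_k(1+\zeta_k^2)=\tfrac{2\gamma_k}{r_k}(\zeta_k+\zeta_k^3)=f_k$, matching \eqref{e-f-zeta}. Summing over $k\in\mc{K}$ yields $A=\sum_{k\in\mc{K}} e_k+\sum_{k\in\mc{K}} f_k\phi_k+\sum_{k\in\mc{K}} o(\phi_k)$.

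It remains to deal with the denominator $1-S$. Here I would note that $\alpha_k=\gamma_k\sin^2\phi_k/D_k=O(\phi_k^2)$, because its numerator is quadratic in $\phi_k$ while $D_k\to\sin^2\theta_k\neq0$ for the active users (where $\theta_k\geq\theta_k^\circ>0$). Consequently $S=\sum_{k\in\mc{K}}\alpha_k$ is itself $O(\phi^2)$, so for small enough openings $S<1$ — consistent with feasibility, which is guaranteed by the MQCS condition of Theorem~\ref{T_Haft} — and $1/(1-S)=1+O(\phi^2)=1+o(\phi)$. Multiplying the expansion of $A$ by this factor leaves the constant and first-order terms untouched and pushes the correction into the remainder, giving exactly \eqref{PubApprx}. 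The only point requiring care is the bookkeeping of the several distinct small parameters $\phi_k$: I would phrase the $o(\cdot)$ statements per user and verify that the coupling introduced by $1-S$ enters strictly at second order, so that it is absorbed cleanly into $\sum_{k\in\mc{K}}o(\phi_k)$ rather than corrupting the $f_k\phi_k$ terms. This is the main (though mild) obstacle; the expansion of $D_k$ itself is routine.
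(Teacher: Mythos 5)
Your proposal is correct and follows essentially the same route as the paper's proof: expand $\alpha_k/\beta_k=(\gamma_k/r_k)/\bigl(\gamma_k\sin^2\phi_k+\sin^2(\theta_k-\phi_k)\bigr)$ to first order in $\phi_k$ to obtain $e_k+f_k\phi_k+o(\phi_k)$, and note that $\alpha_k=O(\phi_k^2)$ so the factor $1/(1-\sum_{k\in\mc{K}}\alpha_k)$ perturbs the result only at second order. Your write-up merely fills in the algebra the paper compresses into ``after a few manipulations,'' including the geometric-series inversion of the denominator and the explicit justification that the denominator's coupling is absorbed into the $o(\phi_k)$ terms.
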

\begin{proof}
See Appendix \ref{A_3}.
\end{proof}

So far, we have only considered the active users $k\in\mc{K}$. In order to make the results applicable to the general case where some users might be in outage, we substitute $\gamma_k$ with $\gamma_k I_k$ in definitions of $e_k$ and $f_k$ in \eqref{PubApprx} so that users with $I_k=0$ contribute zero power to the sum-power upper bound. We therefore use the following expression for the average sum-power upper bound, where we have also replaced $r_k$ by the quantized magnitude $\tilde{Y}_k$:
\be\label{PubApprx2}
\mbb{E}[P_{_{\md{MU}}}]\approx\sum_{k}\mbb{E}[{e_k}]+\sum_{k}{\mbb{E}[f_k]\phi_k},
\ee
where $\zeta_k{=}\cot\theta_k$ and $e_k$ and $f_k$ are redefined as follows:
\be\label{e-f-zeta2}
e_k=\frac{\gamma_k I_k}{\tilde{Y}_k}(1+\zeta_k^2),~~~f_k=\frac{2\gamma_k I_k}{\tilde{Y}_k}(\zeta_k+\zeta_k^3).\nonumber
\ee

This concludes the optimization of the power control function. In the next section, we use the average sum-power upper bound in \eqref{PubApprx2} to optimize the product quantization codebook structures and to derive the asymptotic bit allocation laws.

\section{Product Codebook Optimization and Asymptotic Bit Allocation Laws}\label{S_Shish}

In this section, we study the quantization codebook optimization. For this purpose, we use the average transmission power bound in \eqref{PubApprx2} in order to optimize the users' magnitude and direction codebook sizes for a given feedback link capacity constraint and to derive the optimal bit allocation across the users and their magnitude and direction quantization codebooks. The optimization process is asymptotic in the feedback rate $B$ and assumes large quantization codebook sizes, $\NkM,\NkD\gg 1$.

Consider the sum-power upper bound in \eqref{PubApprx2}. Assuming that $\theta_k=\angle(\tilde{\bu}_k,\tilde{\mbf{U}}_{-k})$ is approximately uniform in $[0,\pi/2]$ and using the definition of the activity flag in \eqref{flag2}, we have
\begin{align}
&\mbb{E}[e_k]= \gamma_k \mbb{E}\left[{1}/{\tilde{Y}_k}\right]\frac{1}{\pi/2}{\int_{\theta_k^\circ}^{\pi/2} \!\!\!\!{1{+}\zeta_k^2}\ \du \theta_k} \approx \frac{2\gamma_k}{\pi\theta_k^\circ}\mbb{E}\left[{1}/{\tilde{Y}_k}\right], \nonumber \\
&\mbb{E}[f_k] = 2\gamma_k \mbb{E}\left[{1}/{\tilde{Y}_k}\right]\frac{1}{\pi/2}{\int_{\theta_k^\circ}^{\pi/2}
\!\!\!\!{\zeta_k{+}\zeta_k^3} \ \du \theta_k} \approx \frac{2\gamma_k}{\pi{\theta_k^\circ}^2}\mbb{E}\left[{1}/{\tilde{Y}_k}\right],\nonumber
\end{align}
where the approximations hold for $\theta_k^\circ{\ll} 1$. By substituting these in \eqref{PubApprx2}, we achieve
\be\label{PubApprx4} \mbb{E}[P_{_{\md{MU}}}]\approx  \frac{2}{\pi} \sum_{k=1}^{M} {{\frac{\gamma_k}{\theta_k^\circ}\ \mbb{E}\left[\frac{1}{\tilde{Y}_k}\right]
\left(1+\frac{\phi_k}{\theta_k^\circ}\right)}}. \ee

The parameter $\phi_k$ in \eqref{PubApprx4} is controlled by the direction codebook size $\NkD$ as described by \eqref{phi-bound-mu}. The term $\mbb{E}[{1}/{\tilde{Y}_k}]$ on the other hand is controlled by the magnitude codebook. The asymptotically optimal codebook that minimizes this term is the uniform (in dB) magnitude quantization codebook $\mbb{Y}^\star$. By setting $\mbb{Y}=\mbb{Y}^\star$ and by using \eqref{Pavescale-mu} and \eqref{phi-bound-mu}, we can bound the average sum power in \eqref{PubApprx4} as follows:
\begin{align}\label{final-power-bound}
&\mathbb{E}\left[P_{_{\md{MU}}}\right]<\frac{2\rho_{_{\md{MU,CSI}}}}{\pi}\ \cdot\nonumber\\
&~~~\sum_{k{=}1}^M{\frac{\gamma_k}{\theta_k^\circ}
\left(1{+}\NkM^{{-}\zeta_k(\NkM)}{+}\omega\NkM^{{-}2\zeta_k(\NkM)}\right)
\left(1{+}\frac{4\lambda_M}{\theta_k^\circ}\NkD^{{-}\frac{1}{M{-}1}}\right)}
\end{align}
where $\rho_{_{\md{MU,CSI}}}$ is defined in \eqref{rho-CSI} and the variable $\omega$ and the function $\zeta_k(\cdot)$ depend on the magnitude outage probability and the distribution of the channel magnitude as described in Section \ref{S_Panj-A-1}.

Our goal is to minimize the average sum-power upper bound in \eqref{final-power-bound} in terms of the magnitude and direction quantization codebook parameters. The optimization constraints are as follows. Assuming a total number of feedback bits $B$, we have the following constraint on the codebook sizes:
\be\label{rate-constraint} \prod_{k=1}^{M}{N_k}=\prod_{k=1}^{M}{\left(\NkM\NkD+1\right)}= 2^B.\ee
The two other constraints are the target outage probability constraints given by \eqref{total-outage} and the MQCS conditions in \eqref{MQCS}. For the total feedback rate $B$, the target outage probabilities $q_k$, and the target SINR values $\gamma_k$, the product codebook optimization problem is therefore formulated as follows:
\begin{align}\label{codebook-opt-mu}
&\min_{\substack{\NkM,\NkD\\ y_{k}^{(1)}, \theta_k^\circ}} \sum_{k{=}1}^M{\frac{\gamma_k}{\theta_k^\circ}
\!\left(\!1{+}\NkM^{{-}\zeta_k(\NkM)}{+}\omega\NkM^{{-}2\zeta_k(\NkM)}\!\right)\!\!\!
\left(\!1{+}\frac{4\lambda_M}{\theta_k^\circ}\NkD^{{-}\frac{1}{M{-}1}}\!\right)} \\
&~~~\md{s.t.} ~\prod_{k=1}^{M}{\left(\NkM\NkD+1\right)}= 2^B, \label{const1}\\
& ~~~~~~~~ F(y_{k}^{(1)})+\frac{2}{\pi}\theta_k^\circ \leq q_k,\label{const2}\\
& ~~~~~~~~ \NkD {\geq} \left({4\lambda_M}\Big{/}{\sin\left(\arctan\left(\frac{\sin{\theta_k^\circ}}{1{+}\sqrt{(M{-}1)\gamma_k}} \right)\right)}\right)^{M{-}1}\!\!.\label{const3}
\end{align}
In order to obtain a closed-form solution for the optimal product structure, we simplify this problem as follows.

First, by assuming $\NkM,\NkD\gg 1$ and using the fact that $\lim_{\NkM\rightarrow\infty}{\zeta_k(\NkM)}=1$, we use the following approximation for the objective function:
\begin{align}\label{obj-approx}
&\left(1+\NkM^{-\zeta_k(\NkM)}+\omega\NkM^{-2\zeta_k(\NkM)}\right)
\left(1+\frac{4\lambda_M}{\theta_k^\circ}\NkD^{-\frac{1}{M-1}}\right)\nonumber\\
&\hspace{1in}\approx \left(1+\NkM^{-1}\right)
\left(1+\frac{4\lambda_M}{\theta_k^\circ}\NkD^{-\frac{1}{M-1}}\right)\nonumber\\
&\hspace{1in}\approx 1+\NkM^{-1}+\frac{4\lambda_M}{\theta_k^\circ}\NkD^{-\frac{1}{M-1}}.
\end{align}
We therefore have the following approximate upper bound:
\be\label{approx-final-power-bound}
\mathbb{E}\left[P_{_{\md{MU}}}\right]<\frac{2\rho_{_{\md{MU,CSI}}}}{\pi}\sum_{k{=}1}^M{\frac{\gamma_k}{\theta_k^\circ}
\left(1{+}\NkM^{-1}{+}\frac{4\lambda_M}{\theta_k^\circ}\NkD^{{-}\frac{1}{M{-}1}}\right)}.
\ee

Next, we simplify the optimization constraints as follows. We approximate the first constraint \eqref{const1} as $\prod_{k=1}^M{\NkM\NkD}=2^B$. Regarding the outage constraint in \eqref{const2}, one can easily show that the objective function in \eqref{codebook-opt-mu} is a decreasing function of $y_{k}^{(1)}$ and $\theta_k^\circ$. The constraint in \eqref{const2} should therefore be satisfied with equality at the optimum. In order to simplify this constraint, we make the assumption that the magnitude and direction outage probabilities are equally likely\footnote{It can be shown that any other division of the form $\dot{q}_k = \alpha q_k$ and $\ddot{q}_k = (1-\alpha) q_k$ with $0<\alpha<1$ only changes the bit allocations in Theorem \ref{T_Noh} by a finite constant and therefore does not affect the asymptotic bit allocation results.} :
\begin{align}
&\dot{q}_k=F(y_{k}^{(1)})=\frac{q_k}{2}\label{qdot}\\
&\ddot{q}_k=\frac{2}{\pi}\theta_k^\circ=\frac{q_k}{2}\label{qddot}.
\end{align}
According to this assumption, $y_{k}^{(1)}=F^{-1}(q_k/2)$ and \be\label{theta0} \theta_k^\circ=\frac{\pi}{4}q_k\ee are fixed and the codebook optimization is only over the codebook sizes. Finally, since the optimization is asymptotic in the codebook sizes $\NkM$ and $\NkD$, the last constraint in \eqref{const3} is redundant. This constraint however is used later to derive a lower bound on the total feedback rate $B$ such that the target outage probabilities are feasible.

Now, by using the approximation in \eqref{obj-approx}, the optimization problem in \eqref{codebook-opt-mu} simplifies to the following optimization problem:
\begin{align}\label{approx-codebook-opt-mu}
&\min_{\NkM,\NkD} \sum_{k=1}^M{\frac{\gamma_k}{\theta_k^\circ}
\left(1+\NkM^{-1}+\frac{4\lambda_M}{\theta_k^\circ}\NkD^{-\frac{1}{M-1}}\right)} \\
&~~\md{s.t.} ~~ \prod_{k=1}^{M}{\NkM\NkD}= 2^B.\label{approx-const1}
\end{align}

Define $\BkM{\defined}\log{\NkM}$ and $\BkD{\defined}\log{\NkD}$ as the number of magnitude and direction quantization bits respectively.
\begin{theorem}\label{T_Noh}
Define
\begin{align}
&\BbarM=\frac{1}{M^2}B-\frac{M{-}1}{M}\log{\frac{1}{\qbar}}-\kappa_{_{\md{MU}}} \label{B-barM}\\
&\BbarD=\frac{M{-}1}{M^2}B+\frac{M{-}1}{M}\log{\frac{1}{\qbar}}+\kappa_{_{\md{MU}}}, \label{B-barD}
\end{align}
where $\kappa_{_{\md{MU}}}{=}\frac{M-1}{M}\log{\frac{16\lambda_M}{\pi(M-1)}}$ and $\qbar{=}\left({\prod_{k}{q_k}}\right)^{1/M}$ is the geometric mean of the target outage probabilities.
The optimal values of $\BkM$ and $\BkD$ are given by
\begin{align}
\BkM &=\BbarM+\log{\frac{\gamma_k}{\gbar}}+\log{\frac{\qbar}{q_k}}\label{OptimalBits1}\\
\BkD &=\BbarD+{{(M{-}1)}\log{\frac{\gamma_k}{\gbar}}}+{{2(M{-}1)}\log{\frac{\qbar}{q_k}}},\label{OptimalBits2}
\end{align}
where  $\gbar{=}\left({\prod_{k}{\gamma_k}}\right)^{1/M}$ is the geometric mean of the target SINR values.
\end{theorem}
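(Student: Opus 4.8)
The plan is to treat the simplified problem \eqref{approx-codebook-opt-mu}--\eqref{approx-const1} as a standard convex resource-allocation (water-filling) problem in the logarithmic domain. First I would discard the constant term $\sum_k \gamma_k/\theta_k^\circ$, which does not depend on the codebook sizes, and substitute the fixed value $\theta_k^\circ=\frac{\pi}{4}q_k$ from \eqref{theta0}. Writing $a_k\defined\gamma_k/\theta_k^\circ$ and $b_k\defined (4\lambda_M/\theta_k^\circ)\,\gamma_k/\theta_k^\circ$ and passing to the bit variables $\BkM=\log\NkM$, $\BkD=\log\NkD$, the objective becomes $\sum_k\!\left(a_k 2^{-\BkM}+b_k 2^{-\BkD/(M-1)}\right)$, while the rate constraint \eqref{approx-const1} becomes the affine equality $\sum_k(\BkM+\BkD)=B$. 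Since each term $2^{-x}$ is convex and the coefficients $a_k,b_k$ are positive, the objective is convex and the feasible set is an affine subspace; hence the Karush--Kuhn--Tucker stationarity conditions are both necessary and sufficient for the global optimum, and (invoking the high-resolution assumptions A1/A5) I would relax the integrality and nonnegativity of the bit counts.

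Next I would form the Lagrangian with a single multiplier $\nu$ for the rate constraint and set $\partial/\partial\BkM=\partial/\partial\BkD=0$. Differentiation yields the water-filling structure in which the scaled marginal of each codebook is equalized across all users and both codebook types: $a_k 2^{-\BkM}=\nu/\ln 2$ and $\frac{1}{M-1}b_k 2^{-\BkD/(M-1)}=\nu/\ln 2$. Solving these for a common constant $c\defined\nu/\ln2$ gives $\BkM=\log a_k-\log c$ and $\BkD=(M-1)\bigl(\log b_k-\log((M-1)c)\bigr)$. I would then eliminate $c$ by imposing $\sum_k(\BkM+\BkD)=B$, which determines $\log c$ as a linear combination of $B$, $\sum_k\log a_k$, $\sum_k\log b_k$, and $\log(M-1)$.

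To recover the stated formulas I would substitute the explicit $a_k=4\gamma_k/(\pi q_k)$ and $b_k=64\lambda_M\gamma_k/(\pi^2 q_k^2)$ and use the geometric-mean identities $\sum_k\log\gamma_k=M\log\gbar$ and $\sum_k\log q_k=M\log\qbar$. Separating the $k$-dependent part of $\log a_k$ (namely $\log\gamma_k-\log q_k$) from the terms common to all users immediately produces the variable part $\log(\gamma_k/\gbar)+\log(\qbar/q_k)$ of \eqref{OptimalBits1}, and likewise the $(M-1)$-scaled variable part of \eqref{OptimalBits2}; the remaining user-independent pieces collapse into $\BbarM$ and $\BbarD$. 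Summing the formulas and using that the geometric-mean-normalized logarithms vanish gives $\sum_k\BkM=M\BbarM$ and $\sum_k\BkD=M\BbarD$ with $\BbarM+\BbarD=B/M$, a convenient consistency check against the constraint. The main obstacle is purely algebraic: tracking the $\log\pi$, $\log\lambda_M$, $\log(M-1)$, and numerical constants through the elimination of $c$ so that they coalesce into the single constant $\kappa_{_{\md{MU}}}=\frac{M-1}{M}\log\frac{16\lambda_M}{\pi(M-1)}$ appearing in \eqref{B-barM}--\eqref{B-barD}. This bookkeeping is routine but must be carried out carefully to land exactly on the claimed $\BbarM$ and $\BbarD$.
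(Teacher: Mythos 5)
Your proposal is correct and follows essentially the same route as the paper: the paper likewise applies the Lagrange multiplier method to the simplified problem \eqref{approx-codebook-opt-mu}--\eqref{approx-const1}, obtaining the stationarity conditions $\NkM=\frac{1}{\Lambda}\frac{\gamma_k}{\theta_k^\circ}$ and $\NkD=\frac{1}{\Lambda^{M-1}}\bigl(\frac{4\lambda_M}{M-1}\frac{\gamma_k}{(\theta_k^\circ)^2}\bigr)^{M-1}$, solving for $\Lambda$ from the product constraint, and substituting $\theta_k^\circ=\frac{\pi}{4}q_k$ --- exactly the conditions your water-filling equations give after the change of variables $\BkM=\log\NkM$, $\BkD=\log\NkD$. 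Your log-domain reformulation is a mild refinement rather than a different method: it makes the objective convex and the constraint affine, so the KKT conditions you write are also sufficient, a point the paper's proof leaves implicit.
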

\begin{proof}
See Appendix \ref{A_4}.
\end{proof}

Although the bit allocation laws in Theorem \ref{T_Noh} are derived with the simplifying assumption of equal magnitude and direction outage probabilities in \eqref{qdot} and \eqref{qddot}, the numerical results, as shown in Fig. \ref{Fig2}, verify that the analytical results in \eqref{OptimalBits1} and \eqref{OptimalBits2} are close to the bit allocations derived by numerical minimization of \eqref{codebook-opt-mu}. For the example in this figure, the base-station has $M=3$ antennas and serves three users with the target parameters $\gamma_k=15$dB and $q_k=0.02$ for the first user ($k=1$) and $\gamma_k=10$dB and $q_k=0.05$ for the two other users ($k=2,3$). The user channels are i.i.d. and $\bh_k\sim\mc{N}(0,\mbf{I}_M)$, where $\bf{I}_M$ is the $M{\times} M$ identity matrix. Also, the number of bits are rounded to the closest integer numbers\footnote{This is a popular approach in solving integer programming problems, where the integer conditions are relaxed and the optimum of the relaxed problem is rounded to the closed integer\cite{boyd}.}.

\begin{figure}%
\vspace{-0.7cm}
\centering
\includegraphics[width=3.4in]{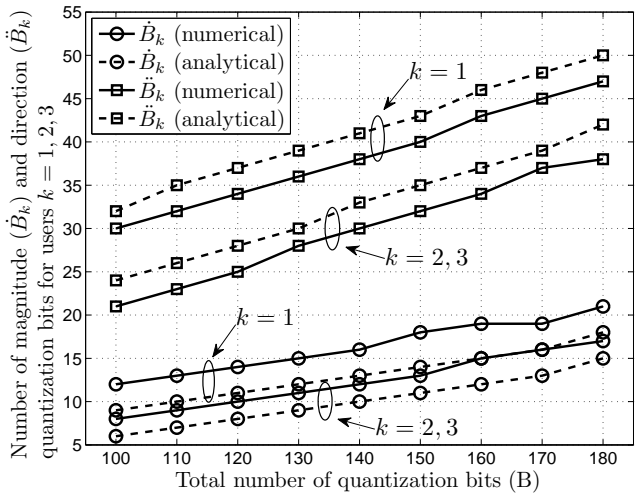}
\caption[]{Analytical magnitude and direction bit allocations in \eqref{OptimalBits1} and \eqref{OptimalBits2} vs. numerical bit allocations for three users with $\gamma_1=15$dB, $q_1=0.02$ and $\gamma_2=\gamma_3=10$dB and $q_2=q_3=0.05$. }
\label{Fig2}
\vspace{-0.7cm}
\end{figure}

\begin{corollary}
For each user $k$, the optimal number of magnitude and direction quantization bits are related as follows:
\be\label{mag-dir-relation} \BkD=(M{-}1)\BkM+(M{-}1)\log{\frac{1}{q_k}}+M\kappa_{_{\md{MU}}},\ee
where $\kappa_{_{\md{MU}}}$ is defined in Theorem \ref{T_Noh}. Moreover the total number of quantization bits for user $k$ is given by
\be\label{eachuser} B_k=\BkM+\BkD=\frac{1}{M}B+M\log{\frac{\gamma_k}{\gbar}}+{{(2M{-}1)}\log{\frac{\qbar}{q_k}}}.\ee
\end{corollary}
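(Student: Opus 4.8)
The plan is to treat this corollary as a direct algebraic consequence of Theorem \ref{T_Noh}: both assertions follow by substituting the closed-form expressions for $\BkM$, $\BkD$, $\BbarM$, and $\BbarD$ given in \eqref{B-barM}--\eqref{B-barD} and \eqref{OptimalBits1}--\eqref{OptimalBits2}, and then collecting terms. No new analysis is required, so I do not anticipate any genuine obstacle; the only care needed is the bookkeeping of the $\log\frac{1}{\qbar}$, $\log\frac{\qbar}{q_k}$, and $\kappa_{_{\md{MU}}}$ contributions.

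To establish the relation \eqref{mag-dir-relation}, I would form the combination $\BkD-(M-1)\BkM$. Because \eqref{OptimalBits2} carries the factor $(M-1)$ in front of $\log\frac{\gamma_k}{\gbar}$ while \eqref{OptimalBits1} carries a unit factor, the $\log\frac{\gamma_k}{\gbar}$ contributions cancel exactly, and the $\frac{1}{M^2}B$ terms inside $\BbarM$ and $\BbarD$ cancel once weighted by $(M-1)$. What remains is $\BbarD-(M-1)\BbarM$ together with the residual outage terms. Using the definitions, the $\log\frac{1}{\qbar}$ coefficients combine as $\frac{M-1}{M}\bigl(1+(M-1)\bigr)=M-1$ and the $\kappa_{_{\md{MU}}}$ coefficients combine as $1+(M-1)=M$, so that $\BbarD-(M-1)\BbarM=(M-1)\log\frac{1}{\qbar}+M\kappa_{_{\md{MU}}}$. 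The final step is the key collapse
\[
(M-1)\log\frac{1}{\qbar}+(M-1)\log\frac{\qbar}{q_k}=(M-1)\log\frac{1}{q_k},
\]
which absorbs the geometric mean $\qbar$ and yields \eqref{mag-dir-relation}.

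For the per-user total in \eqref{eachuser}, I would simply add \eqref{OptimalBits1} and \eqref{OptimalBits2}. The coefficients of $\log\frac{\gamma_k}{\gbar}$ add to $1+(M-1)=M$ and those of $\log\frac{\qbar}{q_k}$ add to $1+2(M-1)=2M-1$, matching the stated right-hand side. The remaining piece is $\BbarM+\BbarD$, and here the construction of \eqref{B-barM}--\eqref{B-barD} pays off: the $\log\frac{1}{\qbar}$ and $\kappa_{_{\md{MU}}}$ terms appear with opposite signs and cancel, leaving $\frac{1}{M^2}B+\frac{M-1}{M^2}B=\frac{1}{M}B$. Combining these gives \eqref{eachuser} directly.

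In short, the corollary is pure algebra; the only \emph{idea} is to notice that $\BbarM$ and $\BbarD$ were engineered so that the auxiliary constant $\kappa_{_{\md{MU}}}$ and the geometric-mean term $\log\frac{1}{\qbar}$ cancel in the sum $\BbarM+\BbarD$ and recombine cleanly in the weighted difference $\BbarD-(M-1)\BbarM$. I expect no step to present a real difficulty.
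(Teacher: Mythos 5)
Your proposal is correct: the corollary is indeed an immediate algebraic consequence of Theorem \ref{T_Noh}, and your bookkeeping of the $B$, $\log\frac{1}{\qbar}$, $\log\frac{\qbar}{q_k}$, and $\kappa_{_{\md{MU}}}$ terms checks out exactly (the coefficients $M-1$, $M$, and $2M-1$ all emerge as you state). This matches the paper, which states the corollary without a separate proof precisely because it follows by this direct substitution and cancellation.
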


As it is expected, if the users are homogenous in their requested target parameters, i.e. $q_k$ and $\gamma_k$ are the same for all users, each user takes an equal share of $\frac{1}{M} B$ of the total feedback rate. In the case of heterogenous users, on the other hand, a user with a higher QoS (lower target outage probability) and a higher target downlink rate (higher target SINR) uses a higher feedback rate $B_k$.


The bit allocation laws in Theorem \ref{T_Noh} are asymptotic results in the feedback rate $B\rightarrow\infty$. In the following, in order to get a sense of how high the feedback rate should be, we determine a lower bound on $B$ for which the target SINR values $\gamma_k$ are in fact achievable with the target outage probabilities $q_k$.
\begin{theorem}\label{T_Dah}
Assume $\gamma_k>1$ and $q_k\ll1$ and define \be\label{Qind} \mc{Q}_k\defined \frac{{\sqrt{\gamma_k}}}{q_k}.\nonumber\ee For the target SINR's $\gamma_k$ to be satisfied with outage probabilities $q_k$, the following total feedback rate $B$ is sufficient:
\be\label{total-rate} B>\frac{1}{2}M^2\log{\gbar} + (M^2{-}M)\log{\frac{1}{\qbar}}+M^2\log\Delta+b,\ee
Here $\gbar$ and $\qbar$ are the geometric means of $\gamma_k$'s and $q_k$'s respectively and \[\Delta=\frac{\bar{\mc{Q}}}{\min\limits_{1{\leq} k{\leq} M}{\mc{Q}_k}},\] where $\bar{\mc{Q}}$ is the geometric mean of $\mc{Q}_k$'s. The constant $b$ in \eqref{total-rate} is defined as $b{=}\frac{1}{2}M^2{+}\frac{3}{2}M^2\log M{+}M^2\kappa_{_{\md{MU}}}$, where $\kappa_{_{\md{MU}}}$ is defined in Theorem \ref{T_Noh}.
\end{theorem}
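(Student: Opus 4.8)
The plan is to read ``feasibility'' through Theorem~\ref{T_Haft}: the target SINRs $\gamma_k$ are achievable at outage probabilities $q_k$ exactly when the robust power-control problem \eqref{robust-prob} is feasible for every active user, which the MQCS condition \eqref{MQCS} guarantees via a lower bound on each direction codebook size $\NkD$. Since the system operates at the asymptotically optimal bit split of Theorem~\ref{T_Noh}, I would substitute the allocated $\BkD=\log\NkD$ from \eqref{OptimalBits2}---an explicit affine function of $B$ through $\BbarD$ in \eqref{B-barD}---and demand that it meet the MQCS threshold for all $1\le k\le M$. The smallest sufficient $B$ is then obtained by forcing the tightest of these $M$ requirements.

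First I would reduce the right-hand side of \eqref{MQCS}. With $\theta_k^\circ=\tfrac{\pi}{4}q_k\ll1$ from \eqref{theta0} and the small-angle approximations $\sin\theta_k^\circ\approx\theta_k^\circ$ and $\sin(\arctan x)\approx x$, the MQCS threshold simplifies to $\NkD\gtrsim\bigl(16\lambda_M(1+\sqrt{(M-1)\gamma_k})/(\pi q_k)\bigr)^{M-1}$. The hypothesis $\gamma_k>1$ then lets me write $(1+\sqrt{(M-1)\gamma_k})/\sqrt{\gamma_k}<1+\sqrt{M-1}$, collapsing the requirement to $\BkD\ge(M-1)\log(A\,\mc{Q}_k)$ with $\mc{Q}_k=\sqrt{\gamma_k}/q_k$ and $A=16\lambda_M(1+\sqrt{M-1})/\pi$. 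This is precisely where the index $\mc{Q}_k$ of the statement enters and where the hypotheses $\gamma_k>1$ and $q_k\ll1$ are consumed.

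Next I would insert \eqref{OptimalBits2} and \eqref{B-barD} into $\BkD\ge(M-1)\log(A\,\mc{Q}_k)$. Using $\log\mc{Q}_k=\tfrac12\log\gamma_k+\log(1/q_k)$, the $\gamma_k$- and $q_k$-dependent terms coalesce into a single $(M-1)\log\mc{Q}_k$; isolating $B$ (whose coefficient is $\tfrac{M-1}{M^2}$) and scaling by $M^2/(M-1)$ yields the per-user requirement $B\ge M^2\log\gbar+M(2M{-}1)\log(1/\qbar)-M^2\log\mc{Q}_k+(\text{const})$. Enforcing this for all users amounts to the worst case $\min_k\mc{Q}_k$; substituting $-\log\min_k\mc{Q}_k=\log\Delta-\log\bar{\mc{Q}}$ with $\Delta=\bar{\mc{Q}}/\min_k\mc{Q}_k$ and $\log\bar{\mc{Q}}=\tfrac12\log\gbar+\log(1/\qbar)$ then collapses the $\gbar$ and $\qbar$ coefficients to the advertised leading terms $\tfrac12 M^2\log\gbar+(M^2{-}M)\log(1/\qbar)+M^2\log\Delta$.

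The only real obstacle is the constant. The computation produces a residual constant $-\tfrac{M^2}{M-1}\kappa_{_{\md{MU}}}+M^2\log A$, and I must show it is dominated by $b=\tfrac12 M^2+\tfrac32 M^2\log M+M^2\kappa_{_{\md{MU}}}$ so that the displayed $B$ is genuinely sufficient. Unpacking $\kappa_{_{\md{MU}}}=\tfrac{M-1}{M}\log\tfrac{16\lambda_M}{\pi(M-1)}$, this reduces to the elementary inequality $(1+\sqrt{M-1})(M-1)\le\sqrt2\,M^{3/2}$, valid for every integer $M\ge2$; indeed it follows from $1+\sqrt{M-1}\le\sqrt{2M}$ (equivalently $(M-2)^2\ge0$) combined with $M-1\le M$. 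Verifying this inequality closes the argument and pins down the constant $b$.
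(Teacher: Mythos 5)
Your proposal is correct and follows essentially the same route as the paper's proof: enforce the MQCS condition at the Theorem \ref{T_Noh} allocation (the paper substitutes the Lagrangian solution \eqref{9-2-2} with multiplier $\Lambda$, which is the same thing in codebook-size rather than bit domain), simplify via $\sin\theta_k^\circ\approx\theta_k^\circ$ and $\theta_k^\circ=\tfrac{\pi}{4}q_k$, and take the worst case over users to produce $\Delta$. The only cosmetic difference is that the paper applies the bound $1+\sqrt{(M-1)\gamma_k}<\sqrt{2M\gamma_k}$ up front so that $b$ emerges exactly, whereas you use the tighter bound $(1+\sqrt{M-1})\sqrt{\gamma_k}$ and defer the comparison to the end via $(1+\sqrt{M-1})(M-1)\le\sqrt{2}\,M^{3/2}$ --- the same elementary inequality underlying the paper's step.
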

\begin{proof}
See Appendix \ref{A_5}.
\end{proof}

Several interesting results can be extracted from Theorem \ref{T_Dah}. First, we observe that for the QoS constraints to remain feasible, the system feedback link capacity should scale logarithmically with the geometric mean of the target SINR values and the geometric mean of the inverse target outage probabilities. Second, if we compare the case of homogenous users with the case of heterogenous users, we see that heterogenous users impose an additional requirement, $M^2\log\Delta$, on the total feedback rate. If we think of $\mc{Q}_k$'s as users' QoS indicators, the variable $\Delta$ can be interpreted as a measure of discrepancy among users' QoS requirements. A higher QoS discrepancy requires a higher feedback bandwidth.

Finally, in order to study the performance of the limited-feedback system as the feedback rate increases, we substitute the optimal magnitude and direction codebook sizes given by Theorem \ref{T_Noh} into the average sum-power upper bound in \eqref{approx-final-power-bound}. The following theorem shows the scaling of the average sum power with the feedback rate $B$.
\begin{theorem}\label{T_Yazdah} For a limited-feedback system with a total number of $B$ feedback bits, we have
\be\label{power-scaling}
\mathbb{E}\left[P_{_{\md{MU}}}\right] <P_{_{\md{MU,CSI}}}\left(1+{\frac{\sigma_{_{\md{MU}}}}{\qbar}}\cdot{2^{-\frac{B}{M^2}}}\right),
\ee
where $P_{_{\md{MU,CSI}}}$ is defined in \eqref{P-CSI}, $\qbar$ is the geometric mean of the target outage probabilities, and \[\sigma_{_{\md{MU}}}=\frac{16M}{\pi(M{-}1)}\left(\frac{\pi^{3/2}(M-1)\Gamma((M+1)/2)}{{16}{\Gamma(M/2)}}\right)^{{1}/{M}}.\]
\end{theorem}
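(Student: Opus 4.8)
The plan is to substitute the asymptotically optimal codebook sizes from Theorem~\ref{T_Noh} directly into the approximate sum-power upper bound \eqref{approx-final-power-bound} and track how each term scales with $B$. First I would convert the bit allocations \eqref{OptimalBits1}--\eqref{OptimalBits2} back to codebook sizes via $\NkM = 2^{\BkM}$ and $\NkD = 2^{\BkD}$. The three correction terms inside the summand of \eqref{approx-final-power-bound} are $1$, $\NkM^{-1}$, and $\frac{4\lambda_M}{\theta_k^\circ}\NkD^{-1/(M-1)}$. The leading $1$ reproduces the perfect-CSI power $P_{_{\md{MU,CSI}}}$ of \eqref{P-CSI} after using $\theta_k^\circ = \frac{\pi}{4}q_k$ from \eqref{theta0} and $\mathbb{E}[1/\tilde{Y}_k]\to\rho_{_{\md{MU,CSI}}}$. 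The key observation, which I expect to do the real work, is that the two correction terms $\NkM^{-1}$ and $\NkD^{-1/(M-1)}$ are engineered by the optimal allocation to scale \emph{identically} in $B$; this is exactly the balancing condition one expects from a KKT/water-filling argument at the optimum, so both decay like $2^{-B/M^2}$.

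Concretely, I would compute $\NkM^{-1} = 2^{-\BkM}$. From \eqref{OptimalBits1}, $\BkM = \BbarM + \log\frac{\gamma_k}{\gbar} + \log\frac{\qbar}{q_k}$, and \eqref{B-barM} gives $\BbarM = \frac{1}{M^2}B - \frac{M-1}{M}\log\frac{1}{\qbar} - \kappa_{_{\md{MU}}}$. Hence $\NkM^{-1}$ carries the factor $2^{-B/M^2}$ together with a product of powers of $\gamma_k,\gbar,q_k,\qbar$ and the constant $2^{\kappa_{_{\md{MU}}}}$. Next I would verify that $\frac{4\lambda_M}{\theta_k^\circ}\NkD^{-1/(M-1)}$ produces the same $2^{-B/M^2}$ dependence: using \eqref{OptimalBits2} and \eqref{B-barD}, the exponent $-\frac{1}{M-1}\BkD$ contributes $-\frac{1}{M-1}\cdot\frac{M-1}{M^2}B = -\frac{B}{M^2}$, matching the magnitude term. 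Substituting $\theta_k^\circ = \frac{\pi}{4}q_k$ and $\lambda_M = \left(\sqrt{\pi}\,\Gamma((M+1)/2)/\Gamma(M/2)\right)^{1/(M-1)}$ lets me collect all $M$- and $q_k$-dependent constants into the stated coefficient $\sigma_{_{\md{MU}}}$.

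The main bookkeeping step, and the place where an error is most likely, is summing $\frac{\gamma_k}{\theta_k^\circ}\mathbb{E}[1/\tilde{Y}_k]$ times each correction factor over $k$ and showing that the per-user factors $\gamma_k/\gbar$, $\qbar/q_k$ collapse under the product/geometric-mean structure so that the net coefficient is independent of the individual $\gamma_k,q_k$ and depends only on the geometric mean $\qbar$. I expect the exponents to be arranged precisely so that after multiplying the summand by $\NkM^{-1}$ (or by the direction term), the $\gamma_k$-dependence cancels against the $\frac{\gamma_k}{\theta_k^\circ}=\frac{4\gamma_k}{\pi q_k}$ prefactor and the $\log\frac{\gamma_k}{\gbar}$, $\log\frac{\qbar}{q_k}$ offsets, leaving a common factor $\frac{1}{\qbar}2^{-B/M^2}$ pulled out of the sum, with the residual sum over $k$ contributing a further factor absorbed into $\sigma_{_{\md{MU}}}$.

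Finally I would compare the resulting coefficient with $\sigma_{_{\md{MU}}}$ as stated. The decisive simplification is that both correction terms share the factor $2^{-B/M^2}$, so the bracket in \eqref{power-scaling} takes the clean form $1 + \frac{\sigma_{_{\md{MU}}}}{\qbar}2^{-B/M^2}$; I would fold the two matched contributions together and confirm the constants $16M/\big(\pi(M-1)\big)$ and the $1/M$-th power of the Gamma-ratio expression emerge from combining $\kappa_{_{\md{MU}}}$, $\lambda_M$, and $\theta_k^\circ$. The hard part is purely the constant-chasing: keeping track of the $2^{\kappa_{_{\md{MU}}}}$ and $\lambda_M^{M/(M-1)}$-type factors and verifying they assemble into exactly $\sigma_{_{\md{MU}}}$, since a single misplaced power of $M-1$ or $\pi$ would throw off the final coefficient. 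The scaling exponent $-B/M^2$ itself, by contrast, follows immediately from the $\frac{1}{M^2}$ and $\frac{M-1}{M^2}$ coefficients of $B$ in $\BbarM$ and $\BbarD$, so the qualitative conclusion is robust even if a constant needs adjustment.
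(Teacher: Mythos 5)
Your plan is, in outline, the paper's own route: the paper also substitutes the optimizing codebook sizes (in their Lagrange-multiplier form $\NkM=\frac{1}{\Lambda}\cdot\frac{\gamma_k}{\theta_k^\circ}$ and $\NkD=\Lambda^{-(M-1)}\bigl(\frac{4\lambda_M}{M-1}\cdot\frac{\gamma_k}{(\theta_k^\circ)^2}\bigr)^{M-1}$, which are equivalent to \eqref{OptimalBits1}--\eqref{OptimalBits2}) into \eqref{approx-final-power-bound}, and the balancing you anticipate is exactly what happens: each user's magnitude correction $\frac{\gamma_k}{\theta_k^\circ}\NkM^{-1}$ collapses to the $k$-independent value $\Lambda$, each direction correction collapses to $(M-1)\Lambda$, so the total additive correction is $M^2\Lambda \propto 2^{-B/M^2}$.

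However, there is a genuine gap at your final step. After the collapse, the additive correction is $\frac{2\rho_{_{\md{MU,CSI}}}}{\pi}M^2\Lambda$ with $\Lambda=\chi\,\gbar\,(1/\bar{\theta}^\circ)^{2-\frac{1}{M}}\,2^{-B/M^2}$, where $\chi=\left(4\lambda_M/(M-1)\right)^{(M-1)/M}$ and $\bar{\theta}^\circ=\frac{\pi}{4}\qbar$. This quantity still carries a factor $\gbar$ and the power $\qbar^{-(2-\frac{1}{M})}$, not $\qbar^{-1}$, so at this stage the coefficient is \emph{not} of the claimed form and is not independent of the SINR targets. The cancellation of $\gbar$ and the reduction to a single $1/\qbar$ do not occur inside the sum, as your write-up suggests; they occur only when the additive correction is compared against $P_{_{\md{MU,CSI}}}\approx\frac{2\rho_{_{\md{MU,CSI}}}}{\pi}\sum_k \gamma_k/\theta_k^\circ$, and this comparison requires two lossy inequalities that the paper invokes explicitly: the AM--GM bound $M\gbar/\bar{\theta}^\circ\leq\sum_k\gamma_k/\theta_k^\circ$, which is what eliminates $\gbar$ and trades one power of $1/\bar{\theta}^\circ$ for the arithmetic sum, and then $\qbar^{1/M}<1$ to absorb the leftover factor $\bar{\theta}^{\circ\,1/M}$ into the constant (this is precisely where the $(\pi/4)^{1/M}$ hidden inside $\sigma_{_{\md{MU}}}$ comes from). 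If you attempt the exact ``collapse'' you describe, you will be left with the ratio $M^2\Lambda\big/\sum_k(\gamma_k/\theta_k^\circ)$, which depends on all individual $\gamma_k$ and $q_k$; the theorem's clean form $P_{_{\md{MU,CSI}}}\bigl(1+\frac{\sigma_{_{\md{MU}}}}{\qbar}2^{-B/M^2}\bigr)$ is attainable only as an upper bound via these two relaxations, not as an identity-like simplification.
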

\begin{proof}
See Appendix \ref{A_6}.
\end{proof}

If we define a quantization distortion measure as
\be\label{distortion}
D(B)=\frac{\mathbb{E}\left[P_{_{\md{MU}}}\right]-P_{_{\md{MU,CSI}}}}{P_{_{\md{MU,CSI}}}},
\ee
Theorem \ref{T_Yazdah} implies that the distortion measure scales as $2^{-\frac{B}{M^2}}$ as $B\rightarrow\infty$.

\section{Numerical Validation}\label{S_Hasht}

This section presents the numerical results that support and verify the analytical results in the earlier sections.

\begin{figure}
\centering
\includegraphics[width=3.4in]{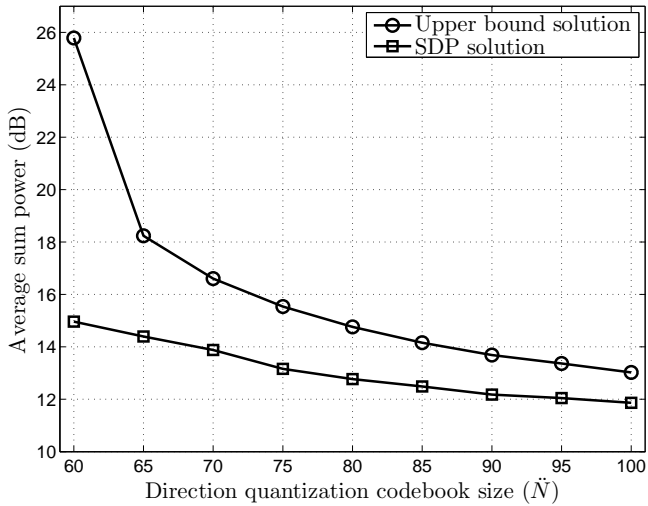}
\caption{SDP solution in \eqref{SDP} vs. the upper bound solution in \eqref{Pub} for three users with target SINR's of $\gamma_1=3$dB and $\gamma_2,\gamma_3=6$dB.}
\label{SDP_ub_comp}
\end{figure}

\subsubsection{Upper Bound Approximation for The SDP Problem in \eqref{SDP}}

The codebook optimizations in Section \ref{S_Shish} are based on the sum-power upper bound solution in \eqref{Pub} as an approximation of the solution to the SDP problem in \eqref{SDP}. Here we investigate the accuracy of this approximation by comparing the two solutions for $M=3$ users with target SINR's of $\gamma_1=3$dB and $\gamma_2,\gamma_3=6$dB.

To simplify the comparison, we assume perfect channel magnitude information, i.e. the quantized magnitude variables $r_k$ in \eqref{SDP} and \eqref{Pub} are equal to the exact channel magnitudes. For direction quantization we use Grassmannian codebooks from \cite{grass_code}. The same codebook size is used for all users. The sum power values are averaged over $100$ channel realizations for which the SDP problem is feasible. The user channels are i.i.d. and $\bh_k\sim\mc{N}(0,\mbf{I}_M)$.

Fig. \ref{SDP_ub_comp} compares the two solutions as a function of the direction quantization codebook size $\ND$. As the figure shows, the two solutions converge as $\ND$ increases\footnote{This can made rigorous by showing that as $\NkD\rightarrow\infty$ and $\phi_k\rightarrow 0$, the inequality in \eqref{comp3} is satisfied with an equality and therefore the upper bound solution is exact in the asymptotic high-resolution regime.}. This justifies the use of the upper bound solution as the optimization objective in our high resolution analysis.

\subsubsection{Bit Allocations and Distortion Scaling}

Fig. \ref{Fig2} in Section \ref{S_Shish} compares the numerical and analytical bit allocations for $M=3$ users with different target parameters. Here we repeat the process with a different set of parameters and record the share of each user $B_k$ from the total number of feedback bits $B$. The target parameters are $\gamma_1{=}2$dB, $\gamma_2{=}5$dB, $\gamma_3{=}8$dB, and $q_1{=}q_2{=}q_3{=}0.1$. Channel models are similar to those used in Figs. \ref{Fig2} and \ref{SDP_ub_comp} and the bit allocations are rounded to the closest integer numbers. As the figure verifies, users with higher target SINR's receive larger shares of the total feedback rate.

\begin{figure}
\centering
\includegraphics[width=3.4in]{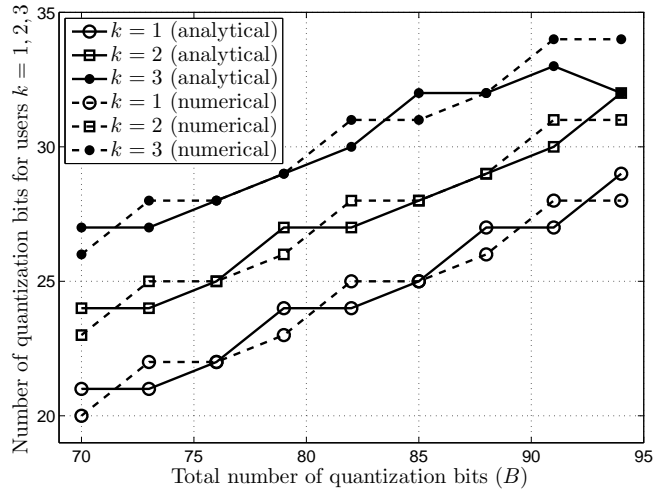}
\caption{Bit allocations $B_k$ for three users with $\gamma_1{=}2$dB, $\gamma_2{=}5$dB, $\gamma_3{=}8$dB, and $q_1{=}q_2{=}q_3{=}0.1$.}
\label{bit_allocations}
\end{figure}

Finally, we investigate the system performance scaling with the number of feedback bits for the same set of parameters as in Fig. \ref{bit_allocations}. For this purpose we use the average sum-power upper bound $\mathbb{E}\left[P_{_{\md{MU}}}\right]$ in \eqref{final-power-bound} and the definition of the distortion measure $D(B)$ in \eqref{distortion}. Fig. \ref{distortion_fig} shows the distortion measure as a function of $B$ when numerical and analytical bit allocations are utilized. As expected, the two bit allocations show close performances. The figure also shows the distortion upper bound in Theorem \ref{T_Yazdah} for the purpose of comparison.

\begin{figure}
\centering
\includegraphics[width=3.4in]{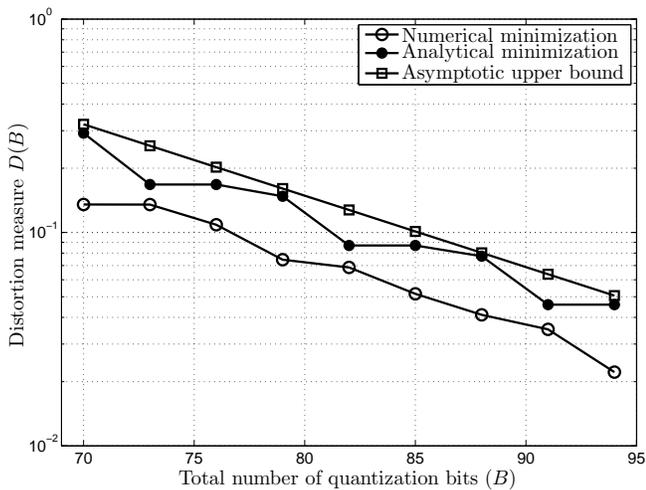}
\caption{Distortion measure with numerical and analytical bit allocations for three users with $\gamma_1{=}2$dB, $\gamma_2{=}5$dB, $\gamma_3{=}8$dB, and $q_1{=}q_2{=}q_3{=}0.1$.}
\label{distortion_fig}
\end{figure}

\section{Concluding Remarks}\label{S_Haft}

We conclude this paper by comparing the asymptotic magnitude-direction bit allocation law for the multi-user system with that of the single-user system discussed in \cite{PartI}. For the multi-user system and in the asymptotic regime where $B\rightarrow\infty$, the relation in \eqref{mag-dir-relation} implies that the number of magnitude and direction quantization bits (for each user) are related as follows: \be\label{mu-opt-bits} \ddot{B}_{_{\md{MU}}}= (M-1)\dot{B}_{_{\md{MU}}}.\ee The subscript $\md{MU}$ in \eqref{mu-opt-bits} stands for multi-user. For single-user systems, on the other hand, we have the following bit allocation law \cite{PartI}: \be\label{su-opt-bits} \ddot{B}_{_{\md{SU}}}= \frac{M-1}{2}\dot{B}_{_{\md{SU}}},\ee with $\md{SU}$ standing for single-user. If we define a relative quantization resolution as $\Upsilon=\ddot{B}/\dot{B}$, then we have \be\label{rel-res} \Upsilon_{_{\md{MU}}}=2\ \Upsilon_{_{\md{SU}}},\ee
which means that for the same number of magnitude quantization bits, the number of multi-user direction quantization bits is twice the number of single-user direction quantization bits. This is shown schematically in Fig. \ref{su_mu_bit_allocation}. As the figure implies, in order to make a single-user channel quantization codebook applicable to the multi-user system, each direction quantization region in the single-user codebook should be further quantized with the same resolution as the whole unit hypersphere.


As the final note we mention that the results in this paper are based on real channel space assumption mainly for the ease of geometric representation of the quantization regions and the corresponding sum-power calculations in Section \ref{S_Char}. However, the exact same approach introduced in this paper can be applied to complex space channels. The only main difference is in using the upper bound in \eqref{phi-bound-mu} for real Grassmannian codebooks. For complex Grassmannian codebooks we need to use the following bound \cite{heath03}:
\be\label{complex-bound} \sin\phi_k<2\NkD^{-\frac{1}{2(M-1)}},\ee
By doing so, the rest of the analysis can be applied in a similar fashion to derive the bit allocation laws. In particular, the codebook optimization problem in \eqref{approx-codebook-opt-mu} translates to the following problem for complex channels:
\begin{align}\label{approx-codebook-opt-mu-complex}
&\min_{\NkM,\NkD} \sum_{k=1}^M{\frac{\gamma_k}{\theta_k^\circ}
\left(1+\NkM^{-1}+\frac{2}{\theta_k^\circ}\NkD^{-\frac{1}{2(M-1)}}\right)} \nonumber\\
&~~\md{s.t.} ~~\prod_{k=1}^{M}{\NkM\NkD}= 2^B.\nonumber
\end{align}
By solving this problem, one can easily derive the bit allocation laws and the system performance scaling for the complex space similar to the ones in Theorems \ref{T_Noh} and \ref{T_Yazdah}.

In particular, due to the difference between \eqref{phi-bound-mu} and \eqref{complex-bound} in the exponents of the direction codebook sizes $\NkD$, the asymptotic magnitude-direction bit allocation in the complex space turns out to be $\BD= 2(M-1)\BM$ for the multi-user case and $\BD= (M-1)\BM$ for the single-user case. Therefore, the complex-space quantization resolutions also satisfy the relation in \eqref{rel-res}.

\begin{figure}[!t]
\centering
\includegraphics[width=2in]{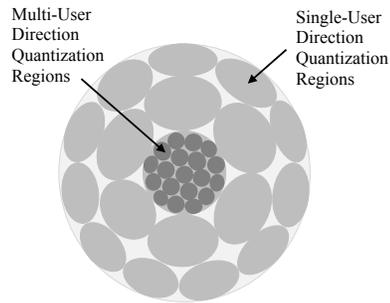}
\caption{$\Upsilon_{_{\md{MU}}}{=}2 \Upsilon_{_{\md{SU}}}$; single-user and multi-user direction quantization regions are shown as spherical caps on the unit hypersphere.}
\label{su_mu_bit_allocation}
\end{figure}

%
%
%
%
%

\appendix


\subsection{Proof of Theorem \ref{T_Haft}} \label{A_1}

In order to prove the theorem, we use the following lemma:

\begin{lemma}\label{Lemma_T_Haft}
To ensure the feasibility of the power control problem in \eqref{robust-prob}, it is sufficient to have
\be\label{sufficient1}
\frac{\tan{\phi_k}}{\sin{\theta_k}}<\frac{1}{1+\sqrt{(M-1)\gamma_k}},
\ee
for all active users $k\in\mc{K}$.
\end{lemma}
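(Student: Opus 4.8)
The plan is to reduce the robust SINR constraints to a system of linear inequalities in the transmit powers and then apply the standard Perron--Frobenius feasibility criterion for interference-limited power control. First I would replace each worst-case SINR in \eqref{robust-const} by its lower bound \eqref{comp2}; because that bound underestimates the true SINR, any power vector that meets the target $\gamma_k$ for the bound also meets the true constraint, so feasibility of the bounded system is sufficient for feasibility of the original. Restricting attention to the active set $\mc{K}$ (silent users contribute $P_l=0$ and may be dropped) and writing $r_k=\tilde Y_k$, the requirement $\frac{P_k r_k \sin^2(\theta_k-\phi_k)}{(\sum_{l\neq k}P_l)r_k\sin^2\phi_k+1}\geq\gamma_k$ rearranges into $P_k \geq a_k\sum_{l\neq k,\,l\in\mc{K}}P_l + n_k$, where $a_k=\frac{\gamma_k\sin^2\phi_k}{\sin^2(\theta_k-\phi_k)}$ and $n_k=\frac{\gamma_k}{r_k\sin^2(\theta_k-\phi_k)}$. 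I would note here that $\theta_k>\phi_k$ holds under the hypothesis, so that $\sin(\theta_k-\phi_k)>0$ and these quantities are well defined and positive.

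Second, I would cast this as $(\mbf{I}-\mbf{A})\mbf{P}\geq\mbf{n}$ with $\mbf{P},\mbf{n}\geq 0$, where $\mbf{A}$ is the nonnegative $|\mc{K}|\times|\mc{K}|$ matrix with off-diagonal entries $A_{kl}=a_k$ and zero diagonal. The classical feasibility result is that a nonnegative $\mbf{P}$ exists whenever the spectral radius satisfies $\rho(\mbf{A})<1$: in that case $(\mbf{I}-\mbf{A})^{-1}=\sum_{j\geq 0}\mbf{A}^j\geq 0$ exists and $\mbf{P}^\star=(\mbf{I}-\mbf{A})^{-1}\mbf{n}\geq 0$ is a valid allocation satisfying all constraints. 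Since $\rho(\mbf{A})$ is bounded above by the maximum absolute row sum, and row $k$ sums to $(|\mc{K}|-1)a_k\leq(M-1)a_k$, it suffices to guarantee $(M-1)a_k<1$ for every active user $k\in\mc{K}$.

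Third, I would translate $(M-1)a_k<1$ into the stated form. This inequality is $\sqrt{(M-1)\gamma_k}\,\sin\phi_k<\sin(\theta_k-\phi_k)$; expanding $\sin(\theta_k-\phi_k)=\sin\theta_k\cos\phi_k-\cos\theta_k\sin\phi_k$ and dividing through by $\cos\phi_k\sin\theta_k$ gives the equivalent form $\frac{\tan\phi_k}{\sin\theta_k}<\frac{1}{\sqrt{(M-1)\gamma_k}+\cos\theta_k}$. Finally, since $\cos\theta_k\leq 1$ we have $\frac{1}{1+\sqrt{(M-1)\gamma_k}}\leq\frac{1}{\sqrt{(M-1)\gamma_k}+\cos\theta_k}$, so the hypothesis \eqref{sufficient1} is a (slightly conservative) sufficient condition that conveniently removes the residual dependence on $\cos\theta_k$ and yields the clean bound claimed in the lemma.

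I expect the main obstacle to be the rigorous invocation of the spectral-radius feasibility criterion rather than the trigonometry: one must confirm that the \emph{inequality} system (not merely its equality/fixed-point version) is solvable with $\mbf{P}\geq 0$ under $\rho(\mbf{A})<1$, and that bounding $\rho(\mbf{A})$ by the maximum row sum together with $|\mc{K}|\leq M$ is legitimate uniformly over every possible active set $\mc{K}$. The trigonometric step is routine by comparison, the only care being to track the sign of $\sin(\theta_k-\phi_k)$, which stays positive precisely because \eqref{sufficient1} keeps $\phi_k<\theta_k$.
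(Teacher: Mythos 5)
Your proof is correct, and it certifies feasibility through a different mechanism than the paper does. The paper's proof takes the explicit candidate solution already derived in \eqref{Pub} — the powers obtained by forcing the lower bound \eqref{comp2} to equal $\gamma_k$ — and shows it is a valid (nonnegative, finite) power allocation by proving $\sum_{k\in\mc{K}}\alpha_k<1$; this is done by showing each $\alpha_k<1/M$ via the same trigonometric step you use (in the paper's form, $\frac{\sin(\theta_k-\phi_k)}{\sin\phi_k}=\frac{\sin\theta_k}{\tan\phi_k}-\cos\theta_k>\frac{\sin\theta_k}{\tan\phi_k}-1>\sqrt{(M-1)\gamma_k}$, which is your $\cos\theta_k\leq1$ relaxation in disguise). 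You instead invoke the classical Perron--Frobenius/Neumann-series criterion for the linear system $(\mbf{I}-\mbf{A})\mbf{P}\geq\mbf{n}$ and bound $\rho(\mbf{A})$ by the maximum row sum $(|\mc{K}|-1)a_k\leq(M-1)a_k$. The two per-user conditions are in fact identical: with $\alpha_k=a_k/(1+a_k)$ one has $(M-1)a_k<1\iff\alpha_k<1/M$, and both arguments then use $|\mc{K}|\leq M$ uniformly over active sets. What each buys: your route is self-contained and standard — it does not require having solved the equality system in closed form first, and the Neumann-series certificate $\mbf{P}^\star=(\mbf{I}-\mbf{A})^{-1}\mbf{n}\geq 0$ handles any active set automatically; the paper's route, by contrast, simultaneously establishes that the specific closed-form allocation \eqref{Pub} is itself well defined under the hypothesis, which matters downstream since \eqref{Pub} is the sum-power upper bound used as the optimization objective in the rest of the paper. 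Your attention to the sign issue ($\phi_k<\theta_k$ follows from \eqref{sufficient1}, keeping $\sin(\theta_k-\phi_k)>0$ and the bound \eqref{comp2} meaningful) is a point the paper leaves implicit.
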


\begin{proof}
The idea is to show that if the condition \eqref{sufficient1} holds, the upper bound solution in \eqref{Pub} is a valid solution to the power control problem in \eqref{robust-prob}. For this purpose, it suffices to show that $\sum_{k\in\mc{K}}{\alpha_k}{<}1$, where $\alpha_k$ is defined in \eqref{Pub}.

According to the condition \eqref{sufficient1}, we have \[\frac{\tan{\phi_k}}{\sin{\theta_k}}<\frac{1}{1+\sqrt{(M-1)\gamma_k}}.\] Then
\be\label{7-1}
\frac{\sin(\theta_k{-}\phi_k)}{\sin{\phi_k}} = \frac{\sin\theta_k}{\tan\phi_k}{-}\cos\theta_k>\frac{\sin\theta_k}{\tan\phi_k}{-}1>\sqrt{(M{-}1)\gamma_k}.
\ee
Therefore
\be\label{7-2}
\alpha_k=\left(1+\frac{\sin^2{(\theta_k-\phi_k)}}{\gamma_k\sin^2{\phi_k}}\right)^{-1}<\frac{1}{M},
\ee
and $\sum_{k\in\mc{K}}{\alpha_k}<1$, since $|\mc{K}|\leq M$.
\end{proof}

The MQCS condition in \eqref{MQCS} is equivalent to
\be\label{7-2.5}
4\lambda_M \NkD ^{-\frac{1}{M-1}} < \sin\left(\arctan\left(\frac{\sin{\theta_k^\circ}}{1+\sqrt{(M-1)\gamma_k}} \right)\right).
\ee
Combining this with the inequality in \eqref{phi-bound-mu}, we have
\be\label{7-2.7} \sin{\phi_k} < \sin\left(\arctan\left(\frac{\sin{\theta_k^\circ}}{1+\sqrt{(M-1)\gamma_k}} \right)\right),\ee
which leads to
\be\label{7-2.6}\tan{\phi_k} < \frac{\sin{\theta_k^\circ}}{1+\sqrt{(M-1)\gamma_k}}.\ee

Now for all active users $k\in\mc{K}$, we have $\theta_k\geq \theta_k^\circ$, since $\theta_k^\circ$ is the smallest acceptable angle between user $k$'s channel and the subspace spanned by all other users' channels. Combining $\theta_k\geq \theta_k^\circ$ with \eqref{7-2.6} leads to \[\tan{\phi_k} < \frac{\sin{\theta_k}}{1+\sqrt{(M-1)\gamma_k}},\]
which according to Lemma \ref{Lemma_T_Haft} guarantees feasibility and therefore completes the proof.

\subsection{Proof of Theorem \ref{T_Shish}} \label{A_2}

Consider the definition of the uncertainty region \[S_k=\left\{\left.\mbf{w}\in \ds{R}^M \right|\sqrt{r}\leq \|\mbf{w}\| < \sqrt{R}, ~ \angle(\mbf{w},\tilde{\mbf{u}})<\phi \right\}.\] As far as the constraint in \eqref{robust-const} in concerned, the constraint $\|\mbf{w}\| < \sqrt{R}$ is redundant since the infimum occurs on the lower surface $ \|\mbf{w}\|=\sqrt{r}$ (see the process of deriving the upper bound solution in \eqref{comp3}).

Now define
\begin{align}
&T_0=\left\{\bw\left|\frac{P_k\left|\mathbf{w}^{T}\mathbf{v}_k\right|^2}{\sum\limits_{l\neq
k}{P_l\left|\mathbf{w}^{T}\mathbf{v}_l\right|^2}+1}>\gamma_k\right.\right\},\\
&T_1=\left\{\bw\left|\|\mbf{w}\|\geq\sqrt{r}\right.\right\},\\
&T_2=\left\{\bw\left||\mbf{w}^T\tilde{\bu}_k|>\|\bw\|\cos\phi\right.\right\}, \label{T2}
\end{align}
where the condition in \eqref{T2} is equivalent to $\angle(\mbf{w},\tilde{\mbf{u}})<\phi$. With these definitions, the constraint in \eqref{robust-const} is equivalent to
\be\label{6-2} T_1\cap T_2 ~\subset~ T_0. \ee

Now, define scalars $\tau_n$ and symmetric matrices $A_n$, for $n=0,1,2$, as follows:
\begin{align}
&A_0={\sum_{l\neq k}{P_l\bv_l\bv_l^T}} - {\frac{1}{\gamma_k}{P_k\bv_k\bv_k^T}}\nonumber\\
&A_1=-\mathbf{I}\nonumber\\
&A_2 = -\frac{1}{\cos^2\phi}\tilde{\bu}_k\tilde{\bu}_k^T+\mbf{I}\nonumber\\
&\tau_0=-1,~~ \tau_1=-r, ~~ \tau_2= 0.\nonumber
\end{align}
With these definitions, the sets $T_0$, $T_1$, $T_2$  can be expressed as sublevels of quadratic functions: \[T_n=\left\{\bw\left| \bw^T A_n \bw \leq \tau_n \right.\right\}, ~~n=0,1,2.\]
We therefore can use the following theorem from \cite{polyak} to replace the condition \eqref{6-2} with a SDP condition. This theorem is an important extension of what is known as S-procedure in the optimization literature \cite{boyd}.

\begin{theorem}
Let $M\geq3$ and $A_n\in\ds{R}^{M\times M}$ be symmetric matrices for $n=0,1,2$ and assume \[\exists ~\nu_1,\nu_2\in \ds{R} ~~ \md{s.t.} ~~\nu_1A_1+\nu_2A_2\succ 0. \] Define the quadratic functions $f_n(\bw)=\bw^T A_n \bw$. Then the following two statements are equivalent:
\begin{align}
&\md{I.~~~~} f_1(\bw)\leq \tau_1~,~ f_2(\bw)\leq \tau_2 ~~ \Rightarrow ~~ f_0(\bw)\leq \tau_0 \\
&\md{II.~~~} \exists ~\lambda>0,~\mu>0 ~~~\md{s.t.}~~~ \left\{\begin{array}{lll} A_0 \preceq \lambda A_1 + \mu A_2 \\ \tau_0\geq \lambda \tau_1 + \mu \tau_2  \end{array} \right.
\end{align}
\end{theorem}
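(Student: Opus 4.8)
The plan is to prove the two implications separately, with the reverse implication (II) $\Rightarrow$ (I) being routine and the forward implication (I) $\Rightarrow$ (II) carrying all the content. For (II) $\Rightarrow$ (I), I would take any $\mbf{w}$ with $f_1(\mbf{w})\leq\tau_1$ and $f_2(\mbf{w})\leq\tau_2$ and chain the inequalities: since $A_0\preceq\lambda A_1+\mu A_2$, we have $f_0(\mbf{w})=\mbf{w}^T A_0\mbf{w}\leq \lambda f_1(\mbf{w})+\mu f_2(\mbf{w})$, and because $\lambda,\mu>0$ the constraint bounds pass through to give $\lambda f_1(\mbf{w})+\mu f_2(\mbf{w})\leq\lambda\tau_1+\mu\tau_2\leq\tau_0$. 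The nonnegativity of the multipliers is exactly what is needed here, so this direction is immediate.

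For the forward direction I would recast (I) as a separation statement in $\mbb{R}^3$. Introduce the joint range $\mc{F}=\{(f_0(\mbf{w}),f_1(\mbf{w}),f_2(\mbf{w})):\mbf{w}\in\mbb{R}^M\}$ and the \emph{bad} set $D=\{(y_0,y_1,y_2):y_0>\tau_0,\ y_1\leq\tau_1,\ y_2\leq\tau_2\}$; then (I) is precisely the assertion $\mc{F}\cap D=\emptyset$. Since $D$ is convex with nonempty interior, if $\mc{F}$ were also convex I could invoke the separating hyperplane theorem for disjoint convex sets to obtain a nonzero functional $p=(p_0,p_1,p_2)$ and a level $c$ with $p_0y_0+p_1y_1+p_2y_2\leq c$ on $\mc{F}$ and $\geq c$ on $D$.

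The crux — and the main obstacle — is establishing that $\mc{F}$ is convex. This is false in general for three quadratic forms, and it is exactly here that the hypotheses $M\geq 3$ and the existence of a positive-definite combination $\nu_1A_1+\nu_2A_2\succ 0$ enter; this is the content of Polyak's convexity theorem for the image of a triple of quadratic maps, which I would invoke (Dines' classical result handles only two forms and needs no dimension restriction, so it is the third form that forces $M\geq 3$). The positive-definite combination additionally secures that $\mc{F}$ is closed and, later, that the separation is proper.

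Granting convexity and the separating functional, the rest is bookkeeping. Reading off the recession directions of $D$ — which extends to $+\infty$ along $y_0$ and to $-\infty$ along $y_1$ and $y_2$ — forces $p_0\geq 0$, $p_1\leq 0$, $p_2\leq 0$, so I set $\lambda=-p_1\geq 0$ and $\mu=-p_2\geq 0$. Because each $f_n$ is a homogeneous quadratic form, the scalar inequality $p_0f_0(\mbf{w})-\lambda f_1(\mbf{w})-\mu f_2(\mbf{w})\leq c$ holding for all $\mbf{w}$ (hence under $\mbf{w}\mapsto t\mbf{w}$ with $t\to\infty$) forces the matrix $p_0A_0-\lambda A_1-\mu A_2$ to be negative semidefinite and $c\geq 0$; evaluating the $D$-side infimum at the corner $(\tau_0,\tau_1,\tau_2)$ gives $p_0\tau_0\geq\lambda\tau_1+\mu\tau_2$. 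The final delicate point, which I expect to rely once more on the hypothesis $\nu_1A_1+\nu_2A_2\succ 0$ as a strict-feasibility condition ruling out a degenerate \emph{vertical} separator, is to show $p_0>0$; dividing through then yields $A_0\preceq(\lambda/p_0)A_1+(\mu/p_0)A_2$ and $\tau_0\geq(\lambda/p_0)\tau_1+(\mu/p_0)\tau_2$ with strictly positive multipliers, which is exactly (II).
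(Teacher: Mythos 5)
You should first know how this statement functions in the paper: the authors do not prove it at all, they quote it as a known result from Polyak's paper \cite{polyak} (it is his S-procedure theorem for two quadratic constraints) and immediately apply it. So the relevant comparison is with Polyak's original proof, and your route is essentially his: the trivial direction (II) $\Rightarrow$ (I), then for (I) $\Rightarrow$ (II) a separating-hyperplane argument in $\mathbb{R}^3$ resting on the convexity (and closedness) of the joint image $\mathcal{F}$ of the three homogeneous forms, which is exactly what Polyak's convexity theorem supplies under $M\geq 3$ and the existence of a definite combination (your remark that Dines' two-form result would not suffice is also on point). The sign bookkeeping via recession directions of $D$, the homogeneity argument giving the LMI, and the corner evaluation are all correct.

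The genuine gap is the last step, $p_0>0$. The hypothesis $\nu_1A_1+\nu_2A_2\succ 0$ is \emph{not} a strict-feasibility condition and cannot rule out a vertical separator; what rules it out in Polyak's Theorem 4.1 is an additional Slater hypothesis, namely the existence of $\mathbf{w}^0$ with $f_1(\mathbf{w}^0)<\tau_1$ and $f_2(\mathbf{w}^0)<\tau_2$, which the paper's transcription of the theorem has silently dropped. Without it the implication (I) $\Rightarrow$ (II) is simply false, so no argument can close your gap. Concretely, take $M=3$, $A_1=\mathbf{I}_3$, $\tau_1=1$, $A_2=-\operatorname{diag}(1,0,0)$, $\tau_2=-1$, $\tau_0=0$, and $A_0$ the symmetric matrix with $f_0(\mathbf{w})=2w_1w_2$. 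The hypothesis holds with $(\nu_1,\nu_2)=(1,0)$, and the feasible set $\{\|\mathbf{w}\|^2\leq 1,\ w_1^2\geq 1\}$ is just $\{(\pm1,0,0)\}$, on which $f_0=0\leq\tau_0$, so (I) holds; but (II) fails even with $\lambda,\mu\geq 0$, since $\lambda A_1+\mu A_2-A_0\succeq 0$ forces $(\lambda-\mu)\lambda\geq 1$, hence $\lambda-\mu>0$, contradicting $\tau_0\geq\lambda\tau_1+\mu\tau_2$, i.e. $0\geq\lambda-\mu$. (Slater fails here: there is no point with $\|\mathbf{w}\|^2<1$ and $w_1^2>1$.)

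So your proof is completable only after restoring the Slater hypothesis, at which point the degenerate separator is excluded by exactly the argument you anticipated: if $p_0=0$ then $\lambda A_1+\mu A_2\succeq 0$ and $\lambda\tau_1+\mu\tau_2\leq 0$, which a strictly feasible $\mathbf{w}^0$ contradicts. A smaller issue of the same origin: dividing by $p_0$ yields only $\lambda,\mu\geq 0$, not the strict positivity demanded by (II) as written; that discrepancy, like the missing Slater condition, is an artifact of the paper's loose restatement of Polyak's theorem. Neither defect harms the paper's actual application in Theorem 2, since the sector-type uncertainty regions there have nonempty interior, so strict feasibility holds.
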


For the problem in hand, one can easily find $\nu_1$ and $\nu_2$ such that the condition  $\nu_1A_1+\nu_2A_2\succ 0$ is satisfied; therefore, the constraint in \eqref{6-2} translates to the SDP constraints in \eqref{SDP}.

\subsection{Proof of Theorem \ref{T_Hasht}} \label{A_3}

From the definitions of $\alpha_k$ and $\beta_k$ in \eqref{Pub}, we have
\begin{align}
&\alpha_k=\frac{\gamma_k\sin^2{\phi_k}}{\gamma_k\sin^2{\phi_k}+\sin^2{(\theta_k-\phi_k)}}, \nonumber\\
&\frac{\alpha_k}{\beta_k}=\frac{\gamma_k/r_k}{\gamma_k\sin^2{\phi_k}+\sin^2{(\theta_k-\phi_k)}}.\nonumber
\end{align}
For small values of $\phi_k\ll 1$, it is easy to verify that
\begin{align}
&\gamma_k\sin^2{\phi_k}=\gamma_k\phi_k^2+o(\phi_k^2),\nonumber\\
&\sin^2{(\theta_k-\phi_k)}=\sin^2\theta_k-(\sin 2\theta_k)\phi_k + o(\phi_k).\nonumber
\end{align}

After a few manipulations, one can show that
\begin{align} \label{8-3}
&\alpha_k= e_k r_k\phi_k^2 + o(\phi_k^2),\nonumber\\
&\frac{\alpha_k}{\beta_k}=e_k+f_k\phi_k+o(\phi_k),\nonumber
\end{align}
and
\be\label{8-40}
P_{_{\md{MU}}}=\frac{\sum_{k\in\mc{K}}{\alpha_k/\beta_k}}{1-\sum_{k\in\mc{K}}{\alpha_k}}=\sum_{k\in\mc{K}}{e_k{+}f_k\phi_k{+}o(\phi_k)}.\nonumber
\ee

%


\subsection{Proof of Theorem \ref{T_Noh}} \label{A_4}

By applying Lagrange multipliers method, we achieve
\begin{align}
&\NkM=\frac{1}{\Lambda}\cdot \frac{\gamma_k}{\theta_k^\circ} \label{9-2-1} \\
&\NkD=\frac{1}{\Lambda^{M-1}}\cdot \left(\frac{4\lambda_M}{M-1}\cdot \frac{\gamma_k}{(\theta_k^\circ)^2} \right)^{M-1}, \label{9-2-2}
\end{align}
where $\Lambda$ is a Lagrange multiplier that should satisfy the constraint $\prod_{k=1}^{M}{\NkM\NkD}=N$. By solving for $\Lambda$ we get
\be\label{9-3}
\Lambda=\left(\frac{4\lambda_M}{M-1}\right)^{\frac{M-1}{M}}\left(\prod_{k=1}^M \gamma_k\right)^{\frac{1}{M}}\left(\prod_{k=1}^M \frac{1}{\theta_k^\circ}\right)^{\frac{2M-1}{M^2}}\cdot N^{-\frac{1}{M^2}}.
\ee
By substituting \eqref{9-3} in \eqref{9-2-1} and \eqref{9-2-2}, and using $\theta_k^\circ=\frac{\pi}{4}q_k$ as in \eqref{theta0}, and further manipulation, the optimal quantization resolutions $\BkM=\log\NkM$ and $\BkD=\log\NkD$ can be expressed as in \eqref{OptimalBits1} and \eqref{OptimalBits2}.

\subsection{Proof of Theorem \ref{T_Dah}} \label{A_5}

For the target parameters to be feasible, the optimal direction codebook sizes $\NkD{=}2^{\BkD}$ are required to satisfy the MQCS conditions in \eqref{const3}. With the assumption of $q_k\ll 1$, we have $\sin\theta_k^\circ\approx \theta_k^\circ\ll1$ and the MQCS conditions simplify to the following conditions:
\be\label{NkD-lower-bound}
\NkD\geq \left(\frac{4\lambda_M}{\theta_k^\circ}\left(1+\sqrt{(M-1)\gamma_k}\ \right)\right)^{M-1}.
\ee
In the following, we find a lower bound on $B$ that guarantees the conditions in \eqref{NkD-lower-bound}.

For $M\geq 2$ and $\gamma_k>1$, we have the following inequality:
\be\label{10-1} 1+\sqrt{(M-1)\gamma_k}<\sqrt{2M\gamma_k}. \ee
To satisfy the conditions in \eqref{NkD-lower-bound}, therefore, it is sufficient to satisfy
\be\label{10-2}
\NkD\geq \left(\frac{4\lambda_M}{\theta_k^\circ}\sqrt{2M\gamma_k}\right)^{M-1}.
\ee
By substituting $\NkD$ from \eqref{9-2-2} in \eqref{10-2}, we achieve
\be\label{10-3} \frac{1}{\Lambda}\geq (M-1)\frac{\theta_k^\circ}{\gamma_k}\sqrt{2M\gamma_k}. \ee
Therefore it is sufficient to have
\be\label{10-4} \frac{1}{\Lambda}\geq \sqrt{2} M^{3/2} \frac{\theta_k^\circ}{\sqrt{\gamma_k}}. \ee

By substituting the expression for $\Lambda$ in \eqref{9-3} into \eqref{10-4}, we achieve the following constraints on $N=2^B$ for $1\leq k\leq M$:
\be\label{10-5}
N\geq C \cdot \left(\bar{\gamma}\cdot{\bar{\theta}}^{\circ^{-\frac{2M-1}{M}}} \cdot \frac{\theta_k^\circ}{\sqrt{\gamma_k}} \right)^{M^2}
\ee
where $\bar{\gamma}=\left(\prod_k{\gamma_k}\right)^{1/M}$ and $\bar{\theta}^\circ=\left(\prod_k{\theta_k^\circ}\right)^{1/M}$ and \[C=\left(\sqrt{2}M^{3/2}\right)^{M^2}\left(\frac{4\lambda_M}{M-1}\right)^{M(M-1)}.\]

Since \eqref{10-5} is to be satisfied for all $1\leq k\leq M$, we have the following sufficient bound on $N=2^B$:
\begin{align}\label{10-6}
&N\geq C \cdot \left(\bar{\gamma}\cdot{\bar{\theta}}^{\circ^{-\frac{2M-1}{M}}} \cdot \max_{1\leq k\leq M}\frac{\theta_k^\circ}{\sqrt{\gamma_k}} \right)^{M^2}\nonumber\\
&~~= C\cdot \left(\sqrt{\bar{\gamma}} \cdot {\bar{\theta}}^{\circ^{-\frac{M-1}{M}}}\cdot \max\limits_{1{\leq} k{\leq} M}\frac{{{\theta_k^\circ}/{\sqrt{\gamma_k}}}}{{{\bar{\theta}}^{\circ}}/{\sqrt{\gbar}}} \right)^{M^2}.
\end{align}
By substituting $\theta_k^\circ=\frac{\pi}{4}q_k$ and $\bar{\theta}^\circ=\frac{\pi}{4}\qbar$ in \eqref{10-6} and taking the logarithm of the both sides we achieve the lower bound in \eqref{total-rate}, which completes the proof.

\subsection{Proof of Theorem \ref{T_Yazdah}} \label{A_6}

By substituting the optimal values of $\NkM$ and $\NkD$ given by \eqref{9-2-1} and \eqref{9-2-2} into the average sum-power upper bound in \eqref{final-power-bound}, we have
\begin{align}\label{11-1}
\mathbb{E}\left[P_{_{\md{MU}}}\right]&< \frac{2\rho_{_{\md{MU,CSI}}}}{\pi} \sum_{k=1}^M{\frac{\gamma_k}{\theta_k^\circ}\left(1+\NkM^{-1}+\frac{4\lambda_M}{\theta_k^\circ}\NkD^{-\frac{1}{M-1}}\right)}\nonumber\\
&=\frac{2\rho_{_{\md{MU,CSI}}}}{\pi} \sum_{k=1}^M{\left(\frac{\gamma_k}{\theta_k^\circ}+\Lambda+(M-1)\Lambda\right)}\nonumber\\
&=\frac{2\rho_{_{\md{MU,CSI}}}}{\pi} \left({\left[\sum_{k=1}^M\frac{\gamma_k}{\theta_k^\circ}\right]+M^2\Lambda}\right)\nonumber\\
&\stackrel{\md{(a)}}{=}\frac{2\rho_{_{\md{MU,CSI}}}}{\pi} \left(\left[\sum_{k=1}^M\frac{\gamma_k}{\theta_k^\circ}\right]+M^2\chi\gbar\left(\frac{1}{\bar{\theta^\circ}}\right)^{2-\frac{1}{M}} 2^{-\frac{B}{M^2}}\right),
\end{align}
where $\bar{\gamma}{=}\left(\prod_k{\gamma_k}\right)^{1/M}$ and $\bar{\theta}^\circ{=}\left(\prod_k{\theta_k^\circ}\right)^{1/M}$. For the equality (a) we have used the expression \eqref{9-3} for $\Lambda$ with $N=2^B$ and $\chi=\left({4\lambda_M}/({M-1})\right)^{({M-1})/{M}}$.

By further manipulating \eqref{11-1}, we have
\begin{align}\label{11-2}
&\mathbb{E}\left[P_{_{\md{MU}}}\right]\nonumber\\
&<\frac{2\rho_{_{\md{MU,CSI}}}}{\pi} \left(\left[\sum_{k=1}^M\frac{\gamma_k}{\theta_k^\circ}\right]+M\chi\left[M\frac{\gbar}{\bar{\theta^\circ}}\right]
\left(\frac{1}{\bar{\theta}^\circ}\right)^{1-\frac{1}{M}} 2^{-\frac{B}{M^2}}\right)\nonumber\\
&\stackrel{\md{(b)}}{<}\frac{2\rho_{_{\md{MU,CSI}}}}{\pi} \left(\left[\sum_{k=1}^M\frac{\gamma_k}{\theta_k^\circ}\right]+M\chi\left[\sum_{k=1}^M\frac{\gamma_k}{\theta_k^\circ}\right]
\left(\frac{1}{\bar{\theta}^\circ}\right)^{1-\frac{1}{M}} 2^{-\frac{B}{M^2}}\right)\nonumber\\
&= \frac{2\rho_{_{\md{MU,CSI}}}}{\pi} \left[\sum_{k=1}^M\frac{\gamma_k}{\theta_k^\circ}\right] \left(1+\frac{M\chi}{\bar{\theta}^{\circ}}{\bar{\theta}^{\circ^{\frac{1}{M}}}}2^{-\frac{B}{M^2}}\right)\nonumber\\
&\stackrel{\md{(c)}}{<}P_{_{\md{MU,CSI}}} \left(1+\frac{M\chi}{\bar{\theta}^{\circ}}2^{-\frac{B}{M^2}}\right),
\end{align}
where in (b) we use the fact the geometric mean is smaller than arithmetic mean:
\[\frac{\gbar}{\bar{\theta}^\circ}=\left(\prod_{k=1}^M{\frac{\gamma_k}{\theta_k}}\right)^{\frac{1}{M}}\leq \frac{1}{M}\sum_{k=1}^M{\frac{\gamma_k}{\theta_k}},\]
and in deriving (c), we use the definition of $P_{_{\md{MU,CSI}}}$ in \eqref{P-CSI} and the fact that $\bar{\theta}^\circ=\frac{\pi}{4}\qbar<1$, since $\qbar$ is a probability measure.

By substituting $\bar{\theta}^\circ=\frac{\pi}{4}\qbar$ in \eqref{11-2}, we achieve the bound in \eqref{power-scaling} and the proof is complete.

\end{document}